\newcommand{\instancename}[1]{\ensuremath{\mathsf{#1}}} 
\newcommand*{\LPF} {\instancename{LPF}}
\newcommand*{\LCP} {\instancename{LCP}}
\newcommand*{\PLCP}{\instancename{PLCP}}
\newcommand*{\ISA} {\instancename{ISA}}
\newcommand*{\SA}  {\instancename{SA}}
\newcommand*{\BWT}  {\instancename{BWT}}
\newcommand*{\strategyname}[1]{{{\renewcommand{\rmdefault}{phv}\fontfamily{phv}\selectfont\textrm{\textup{#1}}}}} 
\newcommand*{\functionname}[1]{{{\renewcommand{\rmdefault}{ptm}\fontfamily{ppl}\selectfont\textrm{\textup{#1}}}}} 
\newcommand*{\lcp}         [1][]{\UnaryOperator[#1]{\functionname{lcp}}}
\newcommand*{\ibeg}[1]{\mathsf{b}(#1)}
\newcommand*{\iend}[1]{\mathsf{e}(#1)}
\newcommand*{\threshold}{\ensuremath{\xi}}
\newcommand*{\dst}{\mathit{dst}}
\newcommand*{\src}{\mathit{src}}
\newcommand*{\Cand}{\mathcal{C}}
\newcommand*{\exampleString}{\texttt{ababbabababbabbaababa\$}}
\newcommand*{\strScanName}{scan}
\newcommand*{\strScan}{\strategyname{\strScanName}}
\newcommand*{\strPJName}{EM-PJ}
\newcommand*{\strPJ}{\strategyname{\strPJName}}
\newcommand*{\strIMCompactName}{IM-Compact}
\newcommand*{\strIMCompact}{\strategyname{\strIMCompactName}}
\newcommand*{\strEMCompactName}{EM-Compact}
\newcommand*{\strEMCompact}{\strategyname{\strEMCompactName}}
\newcommand*{\iPlcpcomp}{\strategyname{plcpcomp}}
\newcommand*{\iLcpcomp}{\strategyname{lcpcomp}}
\newcommand*{\iEMLPF}{\strategyname{EM-LPF}}
\newcommand*{\iEMBWT}{\strategyname{pEM-BWT}}
\newcommand*{\iLZSS}{\strategyname{LZ77}}
\newcommand*{\iLZend}{\strategyname{LZ-End}}
\newcommand*{\iLZscan}{\strategyname{EM-LZscan}}
\newcommand*{\iSEKKP}{\strategyname{SE-KKP}}
\theoremstyle{definition}
\newtheorem{theorem}{Theorem}[section]
\newtheorem{definition}[theorem]{Definition}
\newtheorem{lemma}[theorem]{Lemma}
\newtheorem{example}[theorem]{Example}
	\renewcommand*{\NAT@spacechar}{~}
\newcommand{\intWort}[1]{\emph{\textbf{#1}}}
\newcommand{\tuple}[1]{(#1)} %
\newcommand{\abs}[1]{\ensuremath{\left|#1\right|}} %
\newcommand{\bsq}[1]{\lq{#1}\rq} %
\newcommand{\menge}[1]{\ensuremath{\left\{#1\right\}}} %
\newcommand{\upgauss}[1]{\left\lceil#1\right\rceil} %
\newcommand{\jbracket}[1]{\ensuremath{\langle{#1}\rangle}} %
\newcommand{\UnaryOperatorS}[2][]{%
	\ifx&#1&%
	\ensuremath{\mathop{}\mathopen{}#2\mathopen{}}%
	\else%
	\ensuremath{\mathop{}\mathopen{}#2\mathopen{}\left(#1\right)}%
	\fi%
}
\newcommand{\UnaryOperator}[2][]{%
	\ifx&#1&%
	\ensuremath{\mathop{}\mathopen{}#2\mathopen{}}%
	\else%
	\ensuremath{\mathop{}\mathopen{}#2\mathopen{}(#1)}%
	\fi%
}
\newcommand{\Oh}[1]{\UnaryOperator[#1]{\mathcal{O}}}
\newcommand{\Om}[1]{\UnaryOperator[#1]{\Omega}}
\newcommand{\Ot}[1]{\UnaryOperator[#1]{\Theta}}
	\Crefname{algocf}{Algorithm}{Algorithms}
	\crefname{algocf}{Algo.}{Algos.}
	\definecolor{solarizedBase03}{HTML}{002B36}
	\definecolor{solarizedBase02}{HTML}{073642}
	\definecolor{solarizedBase01}{HTML}{586e75}
	\definecolor{solarizedBase00}{HTML}{657b83}
	\definecolor{solarizedBase0}{HTML}{839496}
	\definecolor{solarizedBase1}{HTML}{93a1a1}
	\definecolor{solarizedBase2}{HTML}{EEE8D5}
	\definecolor{solarizedBase3}{HTML}{FDF6E3}
	\definecolor{solarizedYellow}{HTML}{B58900}
	\definecolor{solarizedOrange}{HTML}{CB4B16}
	\definecolor{solarizedRed}{HTML}{DC322F}
	\definecolor{solarizedMagenta}{HTML}{D33682}
	\definecolor{solarizedViolet}{HTML}{6C71C4}
	\definecolor{solarizedBlue}{HTML}{268BD2}
	\definecolor{solarizedCyan}{HTML}{2AA198}
	\definecolor{solarizedGreen}{HTML}{859900}
\newcommand{\removelatexerror}{\let\@latex@error\@gobble}
\DeclareSIUnit{\NoUnit}{\relax}
\newcommand*{\UnaryBracketOperator}[2][]{%
    \ifx&#1&%
    \ensuremath{\mathop{}\mathopen{}#2\mathopen{}}%
    \else%
    \ensuremath{\mathop{}\mathopen{}#2\mathopen{}\left[#1\right]}%
    \fi%
}
\newcommand*{\twodots}{\mathbin{{.}\,{.}}}
\def\itemlabelstyle{} %
\DeclareMathOperator{\sort}{sort}
\DeclareMathOperator{\scan}{scan}
\definecolor{plotcolor0}{RGB}{255,0,0}
\definecolor{plotcolor1}{RGB}{0,0,255}
\definecolor{plotcolor2}{RGB}{64,170,0}
\definecolor{plotcolor3}{RGB}{170,0,255}
\definecolor{plotcolor4}{RGB}{0,170,255}
\definecolor{plotcolor5}{RGB}{164,164,0}
\definecolor{plotcolor6}{RGB}{128,128,128}
\definecolor{plotcolor7}{RGB}{0,0,0}
\definecolor{plotcolor8}{RGB}{0,110,165}
\definecolor{plotcolor9}{RGB}{128,64,0}
\pgfplotsset{
  compat=newest,
  major grid style={thin,dotted,color=black!50},
  minor grid style={thin,dotted,color=black!50},
  grid,
    cycle list name={mycolor},
  every axis/.append style={
    line width=0.5pt,
    tick style={
      line cap=round,
      thin,
      major tick length=4pt,
      minor tick length=2pt,
    },
    label style={
        font=\scriptsize
    }
  },
  title style={yshift=-2pt,font=\scriptsize},
  enlarge x limits=0.04,
  every tick label/.append style={font=\tiny},
  xlabel near ticks,
  ylabel near ticks,
  legend cell align=left,
  legend columns=3,
  legend pos=north east,
  legend style={
    font=\scriptsize,
    /tikz/every even column/.append style={column sep=3mm,black},
    /tikz/every odd column/.append style={black},
  },
  plcpplots/.style={
    width=0.33\textwidth,height=0.25\textwidth,
  },
  plcpplots4/.style={
    width=0.25\textwidth,height=0.25\textwidth,
  },
}
\newcommand{\CustomLabel}[2]{%
   \protected@write \@auxout {}{\string \newlabel {#1}{{#2}{\thepage}{#2}{#1}{}} }%
   \hypertarget{#1}{#2}%
}
\def\vecFac{\ensuremath{\mathsf{factors}}}
\def\pqSplit{\ensuremath{\mathsf{PQSplit}}}
\def\vecChld{\ensuremath{\mathsf{requests}}}
\def\vecNChld{\ensuremath{\mathsf{nextRequests}}}
\def\vecPQ{\ensuremath{\mathsf{PQ}}}
\def\vecRes{\ensuremath{\mathsf{result}}}
\title{Bidirectional Text Compression in External Memory}
\author{Patrick Dinklage, Jonas Ellert, Johannes Fischer, \\Dominik K\"oppl, Manuel Penschuck}
\date{}
\begin{document}
\maketitle
\begin{abstract}
  Bidirectional compression algorithms work by substituting repeated substrings by references that, unlike in the famous LZ77-scheme, can point to either direction.\
  We present such an algorithm that is particularly suited for an external memory implementation.
  We evaluate it experimentally on large data sets of size up to 128 GiB (using only 16 GiB of RAM) and show that it is significantly faster than all known LZ77 compressors, while producing a roughly similar number of factors.
  We also introduce an external memory decompressor for texts compressed with any uni- or bidirectional compression scheme.
\end{abstract}

\section{Introduction}
Text compression is a fundamental task when storing massive data sets.
Most practical text compressors such as gzip, bzip2, 7zip, etc., scan a text file with a sliding window, replacing repetitive occurrences within this window.
Although this approach is memory and time efficient~\cite{bell86lz77,storer82lzss}, two occurrences of the same substring are neglected if their distance is longer than the sliding window.
More advanced solutions \cite[to mention only a few examples]{goto13lz,goto14lz,fischer18lz,kaerkkaeinen13lightweight} drop the idea of a sliding window, thereby finding also repetitions that are far apart in the text.
These so-called \iLZSS{}-algorithms have a better compression ratio in practice~\cite[Sect.~6]{ferragina13bit}.
In recent years, these algorithms have also been transformed to the \emph{external memory} (EM) model~\cite{karkkainen14parsing,kempa17lzendcompressed,belazzougui16decoding}.

In this article, we present a modification of \iLZSS{}, called \intWort{plcpcomp}, which is
based on the bidirectional compression scheme \iLcpcomp{} of~\citet{dinklage17tudocomp}, but is better suited for an efficient external memory implementation due to its memory access patterns.
We can compute this scheme by scanning the text and two auxiliary arrays stored in EM (one of them being the \emph{permuted longest common prefix array}, hence the acronym plcp)\@.
We underline the performance of our algorithm with evaluations showing that it is faster than any known \iLZSS{} compressor for massive non-highly repetitive data sets.
We also present the first external decompressor for files that are compressed with a bidirectional scheme.

\subsection{Related Work}
Our work is the first to join the fields of bidirectional and external memory compression. 

\subsubsection{Bidirectional Schemes}
First considerations started with~\citet{storer82lzss} who also coined this notation.
\citet{dissGallant} proved that finding the \emph{optimal} bidirectional parsing, i.e., a bidirectional parsing with the lowest number of factors, is NP-complete.
\citet{dinklage17tudocomp} were the first to present a greedy algorithm for producing a bidirectional parsing called \iLcpcomp{}, which performs well in practice, but comes with no theoretical performance guarantees on its size.
\citet{mauer17lfs} combined the techniques for \iLcpcomp{}~\cite{dinklage17tudocomp} and the longest-first grammar compression~\cite{nakamura09lfs} in a compression algorithm running in \Oh{n^2} time, 
which was subsequently improved to \Oh{n \lg n} time by~\citet{nishi18lzlfs}.
Recently,~\citet{gagie18approxlz} showed an upper bound of $z = \Oh{b \lg(n/b)}$ and a lower bound of $z = \Om{b \lg n}$ for some specific strings, where $b$ and $z$ denote the minimal number of factors in an optimal \emph{bidirectional} parsing and in an optimal \emph{unidirectional} parsing, respectively.
This implies that bidirectional parsing can be exponentially better than unidirectional parsing.
They also proposed a bidirectional parsing based on the Burrows-Wheeler transform (BWT).
\citet{kempa18stringattractors} introduced so-called \emph{string attractors},
showed that a bidirectional scheme is a string attractor and that every string attractor can be represented with a bidirectional scheme.
Last but not least, the bidirectional scheme of~\citet{nishimoto19lzrr} guarantees to produce at most as many factors as \iLZSS{},
but has the disadvantage of a super-quadratic running time.

\subsubsection{EM Compression Algorithms}
Yanovsky~\cite{yanovsky11recoil} presented a compressor called \emph{ReCoil} that is specialized on large DNA datasets.
Ferragina~et~al.~\cite{ferragina12bwt} gave a construction algorithm of the Burrows-Wheeler transform in EM. %
For \iLZSS{} compression,~\citet{karkkainen14parsing} devised two algorithms called \iLZscan{} and \iEMLPF{}\@.
The former performs well on highly-repetitive data, but gets outperformed easily by \iEMLPF{} on other kinds of datasets.
The \iLZSS{} compressed files can be decompressed with an algorithm due to~\citet{belazzougui16decoding}, which also works in general for all
files that have been compressed by a unidirectional scheme.
Finally,~\citet{kempa17lzendcompressed} presented an EM algorithm for computing the \iLZend{} scheme~\cite{kreft10lzend}, a variant of \iLZSS{}.

\subsection{Preliminaries}
\subparagraph*{Model of computation}\label{par:model_of_compute}
We use
the commonly accepted EM model by Aggarwal and Vitter~\cite{aggarwal88iomodel}.
It features two memory types, namely fast internal memory~(IM) which may hold up to $M$ data words, and slow EM of unbounded size.
The measure of the performance of an algorithm is the number of input and output operations (I/Os) required, where each I/O transfers a block of $B$ consecutive words between memory levels.
Reading or writing $n$ contiguous words from or to disk requires $\scan(n) = \Theta(n/B)$~I/Os.
Sorting $n$ contiguous words requires $\sort(n)=\Theta((n/B) \cdot \log_{M/B}(n/B))$~I/Os.
For realistic values of $n$, $B$, and $M$, we stipulate that $\scan(n) < \sort(n) \ll n$.

\subparagraph*{Text}
Let $\Sigma$ denote an integer alphabet of size $\sigma = \abs{\Sigma} = n^{\Oh{1}}$ for a natural number~$n$.
The alphabet~$\Sigma$ induces the \intWort{lexicographic order}~{$\prec$} on the set of strings~$\Sigma^*$.
Let $\abs{T}$ denote the length of a string~$T \in \Sigma^*$.
We write $T[j]$ for the $j$-th character of $T$, where $1 \le j \le n$.
Given $T \in \Sigma^*$ consists of the concatenation $T = UVW$ for $U,V,W \in \Sigma^*$,
we call $U$, $V$, and $W$ a \intWort{prefix}, a \intWort{substring}, and a \intWort{suffix} of $T$, respectively.
Given that the substring~$V$ starts at the $i$-th and ends at the $j$-th position of $T$, we also
write $V = T[i\twodots j]$ and $W = T[j+1\twodots]$.
In the following, we take an element $T \in \Sigma^*$ with $\abs{T} = n$, and call it \intWort{text}.
We stipulate that $T$ ends with a sentinel $T[n] = \texttt{\$} \not\in \Sigma$ that is lexicographically smaller than every character of~$\Sigma$.

\subparagraph*{Text Data Structures}
Let $\SA$ denote the \intWort{suffix array}~\cite{manber93sa} of $T$.
The entry~$\SA[i]$ is the starting position of the $i$-th lexicographically smallest suffix such that $T[\SA[i] \twodots] \prec T[\SA[i+1] \twodots]$ for all integers~$i$ with $1 \le i \le n-1$.
Let $\ISA$ of $T$ be the inverse of $\SA$, i.e., $\ISA[\SA[i]] = i$ for every $i$ with $1 \le i \le n$.
The \intWort{Burrows-Wheeler transform (BWT)}~\cite{burrows94bwt} of~$T$ is the string~\BWT{} with $\BWT[i] = T[n]$ if $\SA[i] = 1$ and $\BWT[i] = T[\SA[i]-1]$ otherwise, for every~$i$ with $1 \le i \le n$.
The \intWort{LCP array} is an array with the property that $\LCP[i]$ is the length of the longest common prefix (LCP) of $T[\SA[i] \twodots]$ and $T[\SA[i-1] \twodots]$ for $i=2,\ldots,n$.
For convenience, we stipulate that $\LCP[1] := 0$.
The array~$\Phi$ is defined as $\Phi[i] := \SA[\ISA[i]-1]$, and $\Phi[i] := n$ in case that $\ISA[i] = 1$.
The \intWort{PLCP array} $\PLCP$ stores the entries of $\LCP$ in text order, i.e., $\PLCP[\SA[i]] = \LCP[i]$.
\cref{figPrelimtextDS} illustrates the introduced data structures.

  \begin{figure}%
    \centering{%
  \setlength{\tabcolsep}{0.38em}%
        \begin{tabular}{l*{23}{c}}
                    \toprule
$i$    & 1  & 2  & 3  & 4  & 5  & 6  & 7  & 8  & 9  & 10 & 11 & 12 & 13 & 14 & 15 & 16 & 17 & 18 & 19 & 20 & 21 & 22 \\
\midrule
$T$     & a  & b  & a  & b  & b  & a  & b  & a & b  & a  & b  & b  & a  & b  & b  & a  & a  & b  & a  & b  & a  & \$ \\
\SA{}   & 22 & 21 & 16 & 19 & 17 & 6  & 1  & 8 & 13 & 3  & 10 & 20 & 15 & 18 & 5  & 7  & 12 & 2  & 9  & 14 & 4  & 11 \\
\ISA{}  & 7  & 18 & 10 & 21 & 15 & 6  & 16 & 8 & 19 & 11 & 22 & 17 & 9  & 20 & 13 & 3  & 5  & 14 & 4  & 12 & 2  & 1 \\
$\Phi$    & 6  & 12 & 13 & 14 & 18 & 17 & 5  & 1 & 2  & 3  & 4  & 7  & 8  & 9  & 20 & 21 & 19 & 15 & 16 & 10 & 22 & 11\\
\LCP{}  & 0  & 0  & 1  & 1  & 3  & 5  & 4  & 7 & 2  & 4  & 5  & 0  & 2  & 2  & 4  & 5  & 3  & 5  & 6  & 1  & 3  & 4 \\
\PLCP{} & 4  & 5  & 4  & 3  & 4  & 5  & 5  & 7 & 6  & 5  & 4  & 3  & 2  & 1  & 2  & 1  & 3  & 2  & 1  & 0  & 0  & 0 \\
\bottomrule
        \end{tabular}
    }%
    \caption{Suffix array, its inverse, $\Phi$, LCP array, and PLCP array of our running example string~$T$.} %
    \label{figPrelimtextDS}
\end{figure}

\subparagraph*{Idea for Using PLCP for Compression} Given a suffix~$T[i\twodots]$ starting at text position~$i$, $\PLCP[i]$ is the length of the longest common prefix of
this suffix and the suffix~$T[\Phi[i]\twodots]$, which is its lexicographical predecessor among all suffixes of $T$.
The longest common prefix of these two suffixes~$T[i\twodots]$ and~$T[\Phi[i]\twodots]$ is $T[i \twodots i+\PLCP[i]-1]$.
The longest string among all these longest common prefixes (for each $i$ with $1 \le i \le n$) is one of the longest re-occurring substrings in the text.
Finding this longest re-occurring substring with \PLCP{} and $\Phi$ is the core idea of our compression algorithm.
This algorithm produces a bidirectional scheme, which is defined as follows.

\section{Compression Scheme}
\label{sec:compression-scheme}

A \intWort{bidirectional scheme}~\cite{storer82lzss} is defined by a factorization $F_1 \cdots F_b = T$ of a text~$T$.
A factor~$F_x$ is either a \intWort{referencing} factor or a \intWort{literal} factor.
A referencing factor~$F_x$ is associated with a pair~$(\src, \ell)$ such that
$F_x$ and $T[\src\twodots\src+\ell-1]$ are two different but possibly overlapping occurrences of the substring~$F_x$ in~$T$.
The pair~$(\src,\ell)$ and the text position~$\src$ are called \intWort{reference} and \intWort{referred position}, respectively.
A factorization is \intWort{cycle-free}, i.e., references are not allowed to have cyclic dependencies.
A factorization is called \intWort{$\threshold$-restricted} for an integer $\threshold \ge 2$
if each referencing factor $F_x$ is at least $\threshold$ characters long (i.e., $\ell \ge \threshold$).

A \intWort{unidirectional scheme} is a special case of a bidirectional scheme,
with the restriction that the referred position of a referencing factor~$F_x$ must be smaller than the starting position of~$F_x$.
The most prominent example of a unidirectional scheme is the \iLZSS{} factorization, whose factorization is usually designed to be 2-restricted.

\subsection{Coding}

A bidirectional scheme codes the factors by substituting referencing factors with their associated references while keeping literal factors as strings.
By doing so, the coding is a list whose $x$-th element is either a string (corresponding to a literal factor) or a reference representing the $x$-th factor ($1 \le x \le b$), which is referencing.

\begin{figure}
	\centering
	\includegraphics[width=0.5\linewidth]{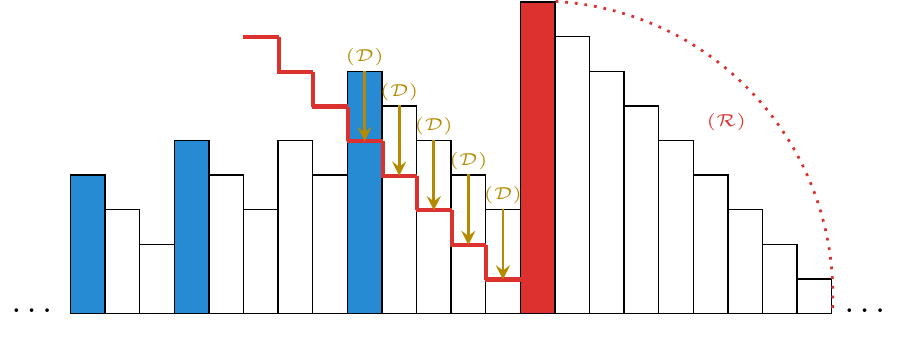}
	\caption{Visualization of Rules~\ref{itPLCPdecrease} and~\ref{itPLCPreplace} being applied. 
		Bars represent $\PLCP$ values. }
	\label{figRuleVis}
\end{figure}

\begin{figure}
	\setlength{\tabcolsep}{0.38em}%
	\newcommand{\XR}[1]{{\color{solarizedRed}\underline{#1}}}%
	\newcommand{\X}[1]{{\color{solarizedGreen}\noindent\fbox{#1}}}%
	\newcommand{\XD}[1]{{\color{solarizedBlue}\underline{#1}}}%
	\centering{\scalebox{0.95}{%
			\begin{tabular}{l*{23}{c}}
				\toprule
				$i$     & 1      & 2     & 3      & 4      & 5      & 6      & 7      & 8     & 9      & 10     & 11     & 12     & 13     & 14     & 15    & 16     & 17 & 18 & 19 & 20 & 21 & 22 \\
				$T$    & a      & b     & a      & b      & b      & a      & b      & a     & b      & a      & b      & b      & a      & b      & b     & a      & a  & b  & a  & b  & a  & \$ \\
				\midrule
				$\PLCP{}$ & 4      & 5     & 4      & 3      & 4      & 5      & 5      & 7     & 6      & 5      & 4      & 3      & 2      & 1      & 2     & 1      & 3  & 2  & 1  & 0  & 0  & 0 \\
				$\PLCP{}^1$ & 4      & 5     & 4      & 3      & \XD{3} & \XD{2} & \XD{1} & \X{0} & \XR{0} & \XR{0} & \XR{0} & \XR{0} & \XR{0} & \XR{0} & 2     & 1      & 3  & 2  & 1  & 0  & 0  & 0 \\
				$\PLCP{}^2$ & \XD{1} & \X{0} & \XR{0} & \XR{0} & \XR{0} & \XR{0} & 1      & 0     & 0      & 0      & 0      & 0      & 0      & 0      & 2     & 1      & 3  & 2  & 1  & 0  & 0  & 0 \\
				$\PLCP{}^4$ & 1      & 0     & 0      & 0      & 0      & 0      & 1      & 0     & 0      & 0      & 0      & 0      & 0      & 0      & 2     & 1      & \X{0}  & \XR{0}  & \XR{0}  & 0  & 0  & 0\\
				$\PLCP{}^3$ & 1      & 0     & 0      & 0      & 0      & 0      & 1      & 0     & 0      & 0      & 0      & 0      & 0      & 0      & \X{0} & \XR{0} & 0 & 0  & 0  & 0  & 0  & 0 \\
				\bottomrule
			\end{tabular}
	}}
	\caption{%
		Step-by-step computation of the instructions in \cref{sec:compression-scheme} computing the \protect\iPlcpcomp{} compression scheme 
		on $T = \exampleString$.
		We overwrite values of \PLCP{} according to Rules~\ref{itPLCPdecrease} and~\ref{itPLCPreplace}.
		Each row $\PLCP^{i}$ shows \PLCP{} after creating the $i$-th referencing factor starting at a position whose \PLCP{} entry is surrounded by a box.
		Changed entries according to Rules~\ref{itPLCPdecrease} and~\ref{itPLCPreplace} are underlined.
	}
	\label{figPLCPSchemeExample}
\end{figure}

The \iPlcpcomp{} scheme and its predecessor, the \iLcpcomp{} scheme~\cite{dinklage17tudocomp}, are bidirectional schemes.
Both schemes are greedy, as they create a referencing factor equal to the longest re-occurring substring of the text that is not yet part of a factor.
They differ in the selection of such a substring in case that there are multiple candidates with the same length.
The \iPlcpcomp{} scheme can be computed with a rewritable \PLCP{} array and the following instructions:
\begin{enumerate} 
    \item Compute the set of \intWort{candidate positions} $\Cand := \{i \mid \PLCP[i] \ge \PLCP[j]$ for all text positions~$j\}$.
    \item Let $\dst$ be the leftmost position of all candidate positions~$\Cand$.
    Terminate if $\PLCP[\dst] < \threshold$.
    \item Create a referencing factor by replacing $T[\dst\twodots\dst+\PLCP[\dst]-1]$ with the reference $(\Phi[\dst],$ $\PLCP[\dst])$
    \item Apply the following rules to ensure that we do not create overlapping factors (cf.\ \cref{figRuleVis}):
        \begin{itemize} 
            \item[\CustomLabel{itPLCPdecrease}{(\ensuremath{\mathcal{D}})}]
              Decrease $\PLCP[j] \gets \min\tuple{\PLCP[j], \dst-j}$ for every $j \in [\dst-\PLCP[\dst], \dst)$.
            \item[\CustomLabel{itPLCPreplace}{(\ensuremath{\mathcal{R}})}]
              Remove the factored positions by setting $\PLCP[\dst+k] \gets 0$ for every $k \in [0,$ $\PLCP[\dst])$.
        \end{itemize}
    \item Recurse with the modified $\PLCP$.
\end{enumerate}

\noindent An application of the above instructions on our running example is given in \cref{figPLCPSchemeExample}.
The coding is visualized in \cref{figPLCPfactorization}.
There and in the following figures, we fix $\threshold := 2$.

\begin{figure}
    \centering{%
        \begin{adjustbox}{valign=c}%
        \includegraphics[width=0.55\textwidth]{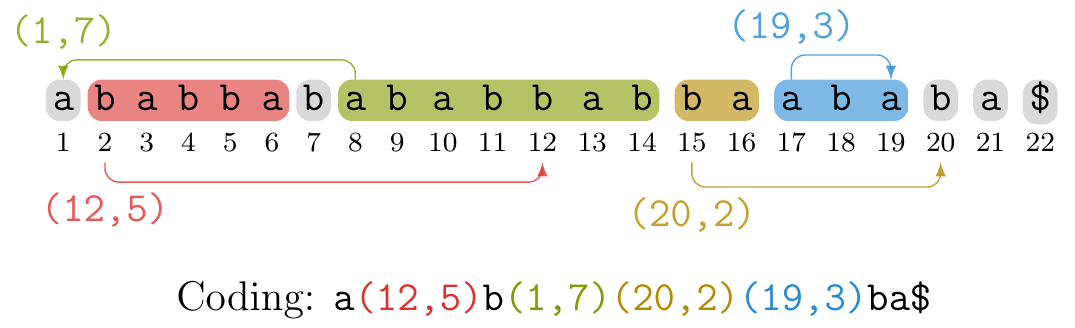}%
        \end{adjustbox}
        \hfill
        \begin{adjustbox}{valign=c}%
 \begin{tabular}{lllll}
\toprule
$\dst$ & $\src = \Phi[\dst]$ & \textrm{length} \\
\midrule
8   & 1                & 7 \\
2   & 12               & 5 \\
17  & 19               & 3 \\
15  & 20               & 2 \\
\bottomrule
 \end{tabular}
        \end{adjustbox}
    }%
    \caption{Coding of \protect\iPlcpcomp{} with $\threshold = 2$.
        The factorization described in \cref{figPLCPSchemeExample} computes four referencing factors, listed in the table on the right.
        These factors are coded by their references.
        The factorization with \PLCP{} in \cref{figPLCPSchemeExample} already determines the starting position and the lengths of all referencing factors
        (columns \bsq{$\dst$} and \bsq{\text{length}} in the table).
        The referred positions are obtained using $\Phi$ (column \bsq{$\src$} in the table).
        The figure on the left illustrates factors as boxed substrings and
        the references as arrows from the starting positions of referencing factors to their respective referred positions.
    }
    \label{figPLCPfactorization}
 \end{figure}

\subsection{Comparison to \texorpdfstring{\protect\iLcpcomp{}}{lcpcomp}}
The difference to \iLcpcomp{}~\cite{dinklage17tudocomp} is that we fix $\dst$ to be the \emph{leftmost} of all candidate positions in~$\Cand$.
\citet{dinklage17tudocomp} presented an algorithm computing the \iLcpcomp{} scheme in \Oh{n \lg n} time with a heap storing the candidate positions
ranked by their \PLCP{} values.
We can adapt this algorithm to compute the \iPlcpcomp{} scheme by altering the order of the heap to rank the candidate positions first by
their \PLCP{} values (maximal \PLCP{} values first) and second (in case of equal \PLCP{} values) by their values themselves (minimal text positions first).

Since \iLcpcomp{} is cycle-free~\cite[Lemma 4]{dinklage17tudocomp} regardless of the selection of $\dst \in \Cand$,
we conclude that \iPlcpcomp{} is also cycle-free, i.e., the substitution of substrings by references is reversible.

\section{Computing the Factorization without Random Access}\label{secPLCPcompAlgo}
In this section, we present an algorithm for computing the \iPlcpcomp{} scheme, 
which linearly scans \PLCP{} without changing its contents.
Instead of maintaining a heap storing all text positions ranked by their \PLCP{} values,
we compute the factorization by scanning the text sequentially from left to right.
Although the algorithm will produce the \iPlcpcomp{} factorization, it does not compute it in the order explained previously (starting with the longest factor).
Instead, it first determines a subset of those substrings that define a referencing factor according to the \iPlcpcomp{} scheme.
The starting positions of these substrings have a \PLCP{} value that is relatively large compared to their
neighboring positions. We call those starting positions \emph{peaks}.

Formally, we call a text position $\dst$ a \intWort{peak} if $\PLCP[\dst] \ge \threshold$ and one of the following conditions holds:
\begin{enumerate}
        \item  $\dst = 1$,
        \item  $\PLCP[\dst-1] < \PLCP[\dst]$,\footnote{A subset of the so-called \emph{irreducible} \PLCP{} entries~\cite[Lemma 4]{karkkainen09plcp} have this property.} or
    \item  there is a referencing factor ending at $\dst-1$.
\end{enumerate}

    A peak~$\dst$ is called \intWort{interesting} if there is no text position~$j$ with $\dst \in (j, j+\PLCP[j])$ 
    and $\PLCP[j] \ge \PLCP[\dst]$.
    An interesting peak~$\dst$ is called \intWort{maximal} if there is no interesting peak~$j$ with $j \in (\dst, \dst+\PLCP[\dst])$.

Given an interesting peak~$\dst$, there is no text position~$j$ with $\PLCP[j] \ge \PLCP[\dst]$ that becomes the starting position of a referencing factor containing $T[\dst]$ (such that $\PLCP[\dst]$ cannot be removed according to Rule~\ref{itPLCPreplace}).
Given a maximal peak~$\dst$, there is additionally no text position~$j$ with $\PLCP[j] > \PLCP[\dst]$ for which we apply Rule~\ref{itPLCPdecrease} on $\PLCP[\dst]$ after factorizing $T[j\twodots{}j+\PLCP[j]-1]$.
Informally, we can determine whether a peak is interesting by looking at the \PLCP{} values \emph{before} this peak,
whereas we need to also look \emph{ahead} for determining whether a peak is maximal.
Given that there is at least one \PLCP{} entry with a value of at least $\threshold$, we can find a maximal peak, since
the leftmost position $\min \menge{i \in [1\twodots{}n] \mid \PLCP[i] \ge \PLCP[j] \text{~for all~} j \text{~with~} 1 \le j \le n}$ 
among all positions with the highest \PLCP{} value is a maximal peak.
The following lemma states that we can always factorize the leftmost maximal peak, regardless of whether the text has even higher peaks.

\begin{lemma}\label{lemGoodPeak}
    If the text position~$\dst$ is a maximal peak, then
        $T[\dst \twodots \dst+\PLCP[\dst]-1]$ is a referencing factor.
\end{lemma}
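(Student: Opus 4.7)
The plan is to track how the entry $\PLCP[\dst]$ evolves during the \iPlcpcomp{} scheme (in its heap-based formulation) and to show that it remains equal to its initial value $\PLCP[\dst]$ until $\dst$ itself is selected as the next factor's starting position. Once this is established, the scheme will create a referencing factor at $\dst$ of length exactly $\PLCP[\dst]$, which is precisely the statement of the lemma.

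The first key step is to deduce from maximality the following monotonicity property: for every $p \in (\dst, \dst + \PLCP[\dst])$, the value $\PLCP[p]$ is at most $\PLCP[\dst]$. I would assume towards contradiction that some $p$ violates this and pick $p$ leftmost. Then either $p = \dst + 1$ or $\PLCP[p-1] \le \PLCP[\dst] < \PLCP[p]$; in either case the peak condition $\PLCP[p-1] < \PLCP[p]$ applies and $p$ is a peak. It is then straightforward to verify that $p$ is also \emph{interesting}: a witness $j'$ with $j' < p < j' + \PLCP[j']$ and $\PLCP[j'] \ge \PLCP[p]$ cannot lie in $(\dst, p)$ by the leftmost choice of $p$, and cannot satisfy $j' \le \dst$ without simultaneously witnessing that $\dst$ fails to be interesting. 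Thus $p$ would be an interesting peak inside $(\dst, \dst + \PLCP[\dst])$, contradicting the maximality of $\dst$.

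With this property in hand, I would then induct over the sequence of factors produced by \iPlcpcomp{}, showing that the current \PLCP{} entry at $\dst$ never changes until $\dst$ is selected. Let $j$ be the next chosen position and $v_j$ its current \PLCP{} value; since entries only decrease, $v_j \le \PLCP[j]$, and since $v_j$ is the current maximum and $\dst$'s current value is still $\PLCP[\dst]$ by the induction hypothesis, $v_j \ge \PLCP[\dst]$, with strict inequality if $j > \dst$ by the leftmost-first tie-break. If $j < \dst$, then Rule~\ref{itPLCPdecrease} only modifies positions left of $j$, so any change at $\PLCP[\dst]$ would have to come from Rule~\ref{itPLCPreplace}, forcing $\dst \in (j, j + v_j) \subseteq (j, j + \PLCP[j])$ together with $\PLCP[j] \ge \PLCP[\dst]$, contradicting that $\dst$ is interesting. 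If $j > \dst$, the strict inequality gives $\PLCP[j] \ge v_j > \PLCP[\dst]$, so the monotonicity property above forces $j \ge \dst + \PLCP[\dst]$; but then Rule~\ref{itPLCPdecrease}'s minimum evaluates to $\PLCP[\dst]$, and Rule~\ref{itPLCPreplace} does not reach $\dst$. The remaining case $j = \dst$ produces the factor claimed by the lemma.

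Termination concludes the proof: since $\PLCP[\dst] \ge \threshold$ stays unchanged as long as $\dst$ has not been chosen, and the scheme halts only when every current \PLCP{} entry falls below $\threshold$, the position $\dst$ must eventually be selected. I expect the main obstacle to be the case $j > \dst$, since neither the definition of \emph{interesting} nor the literal formulation of maximality rules it out directly; the argument must pass through the preliminary monotonicity claim, whose own proof is the most delicate part as it requires combining the interestingness of $\dst$ with the absence of interesting peaks strictly to its right, while being careful about the edge case $p = \dst + 1$.
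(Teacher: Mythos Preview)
Your proposal is correct and follows the same invariance argument as the paper: $\PLCP[\dst]$ is never altered by Rules~\ref{itPLCPdecrease} and~\ref{itPLCPreplace}, so the \iPlcpcomp{} scheme must eventually select $\dst$ and create the factor of length $\PLCP[\dst]$. The paper's own proof asserts this invariance in two sentences without further justification, whereas you supply the supporting monotonicity claim for positions in $(\dst,\dst+\PLCP[\dst])$ and the case split on whether the next selected position lies left or right of $\dst$; these are precisely the details the paper leaves implicit, and your treatment of them is sound.
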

\begin{proof}
When applying Rules~\ref{itPLCPreplace} and~\ref{itPLCPdecrease},
    we do not change the value of $\PLCP[\dst]$, since $\dst$ is a maximal peak.
    Therefore, we will eventually create a referencing factor starting with $\dst$.
\end{proof}

\noindent Our preliminary algorithm consists of the following steps:
\begin{enumerate} 
    \item Scan \PLCP{} for the leftmost maximal peak~$\dst$. \label{itAlgoSearchPeak}
    \item Terminate if no such peak exists.
    \item Create the referencing factor $T[\dst \twodots \dst+\PLCP[\dst]-1]$.
    \item Apply Rules~\ref{itPLCPreplace} and~\ref{itPLCPdecrease}. \label{itAlgoApplyRules}
    \item Interpret $T[1 \twodots \dst-1]$ and $T[\dst+\PLCP[\dst] \twodots n]$ as two independent strings and recurse on each of them individually.
\end{enumerate}

\begin{algorithmflt}
	\begin{algorithm}[H]
		$L \gets \emptyset{}$ \tcp*[r]{Step 1\texttt a}
		\For(\tcp*[f]{Step 1\texttt b}){$\dst = 1$ to $n$}{%
			\If(\tcp*[f]{Step 2\phantom{\texttt a}}){$\dst$ is a maximal peak}{%
				create a referencing factor replacing $T[\dst \twodots \dst+\PLCP[\dst]-1]$ \tcp*[r]{Step 3\phantom{\texttt a}}
				apply Rule~\ref{itPLCPdecrease} to the peaks in $L$ \;
				\While{$L$ contains maximal peaks}{\label{lineAlgStartProcessL}
					$j \gets $ rightmost maximal peak in $L$ \;
					create referencing factor replacing $T[j \twodots j+\PLCP[j]-1]$ \;
					apply Rules~\ref{itPLCPdecrease} and~\ref{itPLCPreplace} to the peaks in $L$ \;
					remove those elements of $L$ that are no longer interesting peaks \;
				} \label{lineAlgEndProcessL}
				$\dst \gets \dst+\PLCP[\dst]$ \;
			}
			\If{$\dst$ is an interesting peak}{%
				$L \gets L \cup \menge{\dst}$ \;
			}
		}
		\caption{Computation of \protect\iPlcpcomp{} factors.}
		\label{algoRoutineAPrime}
	\end{algorithm}
\end{algorithmflt}

This algorithm produces the \iPlcpcomp{} scheme, because
\begin{itemize}
    \item $T[\dst \twodots \dst+\PLCP[\dst]-1]$ is a referencing factor for each selected leftmost maximal peak~$\dst$ according to \cref{lemGoodPeak}, and
    \item the part $T[1 \twodots \dst-1]$ can be factorized independently from how $T[\dst+\PLCP[\dst] \twodots]$ is factorized, and vice versa.
       That is because, having already $T[\dst\twodots\dst+\PLCP[\dst]-1]$ factorized, we can no longer create a factor that covers a text position in the range~$[\dst\twodots\dst+\PLCP[\dst]-1]$.
\end{itemize}
Hence, we can factorize~$T[1 \twodots \dst-1]$ without considering the factorization of the rest of the text to produce the correct \iPlcpcomp{} scheme.
\Cref{figPLCPCompExample} illustrates the computation of the \iPlcpcomp{} factorization with this algorithm.

\begin{figure}
  \setlength{\tabcolsep}{0.38em}%
\newcommand{\XR}[1]{{\color{solarizedRed}\underline{#1}}}%
\newcommand{\X}[1]{{\color{solarizedGreen}\noindent\fbox{#1}}}%
\newcommand{\XD}[1]{{\color{solarizedBlue}\underline{#1}}}%
    \centering{\scalebox{0.98}{%
        \begin{tabular}{l*{23}{c}}
            \toprule
$i$     & 1      & 2     & 3      & 4      & 5      & 6      & 7      & 8     & 9      & 10     & 11     & 12     & 13     & 14     & 15    & 16     & 17 & 18 & 19 & 20 & 21 & 22 \\
 $T$    & a      & b     & a      & b      & b      & a      & b      & a     & b      & a      & b      & b      & a      & b      & b     & a      & a  & b  & a  & b  & a  & \$ \\
\midrule
$\PLCP{}$ & 4      & 5     & 4      & 3      & 4      & 5      & 5      & 7     & 6      & 5      & 4      & 3      & 2      & 1      & 2     & 1      & 3  & 2  & 1  & 0  & 0  & 0 \\
$\PLCP{}^1$ & \XD{1}      & \X{0}     & \XR{0}      & \XR{0}      & \XR{0}      & \XR{0}      & 5      & 7     & 6      & 5      & 4      & 3      & 2      & 1      & 2     & 1      & 3  & 2  & 1  & 0  & 0  & 0 \\
$\PLCP{}^1$ & {1}      & {0}     & {0}      & {0}      & {0}      & {0}      & \XD{1}      & \X{0}     & \XR{0}      & \XR{0}      & \XR{0}      & \XR{0}     & \XR{0}      & \XR{0}  & 2     & 1      & 3  & 2  & 1  & 0  & 0  & 0 \\
$\PLCP{}^2$ & {1}      & {0}     & {0}      & {0}      & {0}      & {0}      & {1}      & {0}     & {0}      & {0}      & {0}      & {0}     & {0}      & {0}  & \X{0}     & \XR{0}      & 3  & 2  & 1  & 0  & 0  & 0 \\
$\PLCP{}^3$ & {1}      & {0}     & {0}      & {0}      & {0}      & {0}      & {1}      & {0}     & {0}      & {0}      & {0}      & {0}     & {0}      & {0}  & {0}     & {0}      & \X{0}  & \XR{0}  & \XR{0}  & 0  & 0  & 0 \\
\bottomrule
        \end{tabular}
    }}
    \caption{%
      Step-by-step computation of our \protect\iPlcpcomp{} algorithm 
      on $T = \exampleString$.
        While the instructions of the scheme (cf.\ \cref{sec:compression-scheme}) always replace the factor starting at a position with the maximal \PLCP{} value (cf.\ \cref{figPLCPSchemeExample}),
        our algorithm described in \cref{secPLCPcompAlgo} creates a factor at the leftmost maximal peak.
        Our algorithm computes the same factorization as described in the \protect\iPlcpcomp{} scheme, but in different order.
    }
    \label{figPLCPCompExample}
\end{figure}

However, as the algorithm overwrites entries of \PLCP{}, it is not yet satisfying.
A rewritable \PLCP{} array would have to be kept in RAM, costing us $n \lg n$ bits of space if we require constant time read and write access.
Instead of keeping $\PLCP[1 \twodots \dst-1]$ in RAM, we now show that it suffices to manage only the \PLCP{} values of the \emph{interesting peaks}.
For that, we enhance the search of the leftmost maximal peak by replacing the first step of the algorithm by the following instructions:
\begin{enumerate} 
    \item[\itemlabelstyle \ref*{itAlgoSearchPeak}a.] Create an empty list of peaks~$L$.
    \item[\itemlabelstyle \ref*{itAlgoSearchPeak}b.] Scan $T$ from left to right until a maximal peak $\dst$ is found. While doing so, insert all visited interesting peaks into~$L$.
\end{enumerate}
Another alternation is that we apply Step~\ref{itAlgoApplyRules} only to the peaks stored in~$L$.
There, we scan $L$ from right to left while applying Rule~\ref{itPLCPdecrease} and removing all elements that are no longer interesting peaks.
The modified algorithm is sketched as pseudo code in \cref{algoRoutineAPrime}.

\begin{example}\label{exPlcpAlgo}
\Cref{figPLCPAlgoExample} illustrates \cref{algoRoutineAPrime} on the prefix $T[1\twodots 14] = \texttt{ababbabababbab}$ of our running example in three steps.
The peaks at positions~$1$~and~$2$ are interesting.
Since the peak at position~$2$ is the highest interesting peak, it is the maximal peak, which is detected after scanning~$\PLCP[1\twodots 6]$ (\cref{figPLCPAlgoExample1}).
In the second step (\cref{figPLCPAlgoExample2}), the referencing factor~$F_1$ is introduced, which starts at this maximal peak.
As a consequence, Rule~\ref{itPLCPdecrease} is applied to the only peak stored in~$L$, the one at position~1. However, because the \PLCP{} value~$1$ is below the threshold~$\threshold=2$, the peak at position~1 is removed from $L$.
Since $L$ is then empty, we proceed with the next scan for a maximal peak starting from position~7.
By definition, the peak at position~7 becomes interesting. The next maximal peak is detected at position~8 (\cref{figPLCPAlgoExample3}).
The factor~$F_2$ (\cref{figPLCPAlgoExample4}) is introduced, and Rule~\ref{itPLCPdecrease} is applied to the peak at position~7.
Its \PLCP{} value drops below our threshold and thus it is removed from $L$.
Finally, the prefix $T[1\twodots 14]$ has been processed.
\end{example}

In \cref{algoRoutineAPrime}, we omit all other peaks that are not stored in $L$ when applying Rules~\ref{itPLCPdecrease} and~\ref{itPLCPreplace}).
Thus, it suffices to maintain the \PLCP{} value of each peak in $L$ in an extra list instead of maintaining a complete rewritable \PLCP{} array.
In the following, we prove why this omission still produces the correct factorization (\cref{lemValidPlcpcomp}).
For that, we show that we can produce the \iPlcpcomp{} factors contained in $T[1 \twodots \dst+\PLCP[\dst]-1]$ only with the \PLCP{} values of the peaks stored in $L$ (first recursive call).
We start with the following property of $L$:

\begin{lemma}\label{lemLAscendingOrder}
	The positions stored in $L$ are in strictly ascending order with respect to their LCP values.
\end{lemma}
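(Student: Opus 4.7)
My plan is to prove the stronger statement that for any two consecutive elements $q_i < q_{i+1}$ of $L$, the interval $(q_i, q_i+\PLCP[q_i])$ strictly contains $q_{i+1}$. Combined with the fact that $q_{i+1}$ is an interesting peak, this immediately yields the lemma: setting $j := q_i$ in the definition of \emph{interesting} forbids the conjunction $q_{i+1} \in (q_i, q_i+\PLCP[q_i])$ and $\PLCP[q_i] \ge \PLCP[q_{i+1}]$; since the first conjunct is exactly the stronger claim, we must have $\PLCP[q_i] < \PLCP[q_{i+1}]$.

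The strengthened invariant I would establish by induction on the execution of \cref{algoRoutineAPrime}, with $L = \emptyset$ as the vacuous base case. The critical inductive step is the append $L \gets L \cup \{\dst\}$, where the current scan position $p = \dst$ enters $L$ as a newly discovered interesting peak and becomes its new last element. Suppose, toward contradiction, that $p \ge q_k + \PLCP[q_k]$ for the prior last element $q_k$. Because the scan has reached $p$, every position in $(q_k, q_k+\PLCP[q_k])$ has already been visited; any interesting peak encountered there would have been inserted into $L$ strictly between $q_k$ and $p$, contradicting that $q_k$ is currently the last element. Hence $(q_k, q_k+\PLCP[q_k])$ contains no interesting peak, and together with the inductive hypothesis that $q_k$ is still interesting this makes $q_k$ a maximal peak. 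But such a situation is exactly what triggers the processing loop of lines~\ref{lineAlgStartProcessL}--\ref{lineAlgEndProcessL}, which would have factorized $q_k$ and removed it from $L$ before $\dst$ reached $p$, contradicting $q_k \in L$ at that moment.

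For the remaining cases I would handle the rule applications. Rule~\ref{itPLCPreplace} only zeros PLCP entries inside the just-factorized range, none of which lie in $L$, and is therefore inert for the invariant. Rule~\ref{itPLCPdecrease} invoked with reference $\dst^*$ replaces $\PLCP[j]$ by $\min(\PLCP[j], \dst^*-j)$ for every $j \in L$ with $j \in [\dst^*-\PLCP[\dst^*], \dst^*)$. Because the cap $\dst^*-j$ is strictly decreasing in $j$, two neighbouring capped elements $q_i < q_{i+1}$ would receive new values satisfying $\PLCP[q_i] > \PLCP[q_{i+1}]$, transiently violating the invariant. The subsequent clean-up then removes exactly the offenders: the capped $q_{i+1}$ lies in $(q_i, \dst^*) = (q_i, q_i+\PLCP[q_i])$ with strictly smaller PLCP than $q_i$, so $q_{i+1}$ is no longer interesting and is discarded. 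Iterating, at most the leftmost element of every consecutive capped block survives.

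The hard part will be verifying that these cascading removals actually terminate in a state where the strengthened invariant is re-established everywhere, especially at the interface between a capped survivor $q_i$ and a non-capped predecessor $q_{i-1}$. This should follow by invoking the pre-update invariant $q_{i-1}+\PLCP[q_{i-1}] > q_i$: if $q_i$'s capped PLCP value were too small compared to the unchanged $\PLCP[q_{i-1}]$, then $q_i$ itself would lie inside $q_{i-1}$'s interval with strictly smaller PLCP, rendering $q_i$ non-interesting so that it too is removed by the same clean-up pass. The removals thus cascade until an ascending sequence of PLCP values remains.
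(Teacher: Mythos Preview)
Your central insight coincides with the paper's: for consecutive entries $L[k]<L[k+1]$ one has $L[k+1]\in(L[k],L[k]+\PLCP[L[k]])$, because otherwise $L[k]$ would already be a maximal peak and hence would never have been placed in $L$; combining this with the definition of ``interesting'' applied to $L[k+1]$ (with witness $j:=L[k]$) forces $\PLCP[L[k]]<\PLCP[L[k+1]]$. That is precisely the paper's argument, stated in four lines.

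Where you diverge is in scope. The paper proves the lemma only for the state of $L$ at the moment the leftmost maximal peak $\dst$ is first encountered, i.e., before any application of Rules~\ref{itPLCPdecrease} or~\ref{itPLCPreplace} to $L$. At that moment $L$ is exactly the sequence of interesting non-maximal peaks preceding $\dst$, so the short direct argument above suffices. Your induction on the whole execution---tracking the invariant through Rule~\ref{itPLCPdecrease}, the clean-up pass, and the cascading removals---proves a stronger statement than the paper states or needs, and you yourself flag that part as the ``hard part'' still to be verified. For this lemma that machinery is unnecessary; you can drop the rule-application cases entirely.

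One small correction to your append-case reasoning: a would-be maximal $q_k$ is not ``factorized and removed from $L$'' by the loop of lines~\ref{lineAlgStartProcessL}--\ref{lineAlgEndProcessL}. Rather, the outer test ``if $\dst$ is a maximal peak'' is evaluated when the scan is at $\dst=q_k$, so a maximal $q_k$ is processed immediately and never enters $L$. The conclusion you draw is still correct, but the mechanism is the outer test, not the inner while-loop.
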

\begin{proof}
	Let $\dst$ be the leftmost maximal peak.
	Assume that there is an entry~$L[k] < \dst$ with $1 \le k \le \abs{L-1}$ and $\PLCP[L[k+1]] \le \PLCP[L[k]]$.
	Since $L[k]$ is an interesting peak,
	there is no text position~$j$ with $L[k] \in (j, j+ \PLCP[j])$ and $\PLCP[j] \ge \PLCP[L[k]]$.
	Since $L[k+1]$ is the succeeding interesting peak of $L[k]$ (with respect to text order),
	$L[k+1] < L[k] + \PLCP[L[k]]$ must hold.
	Otherwise, $L[k]$ would be a maximal peak, which contradicts the fact that $\dst$ is the leftmost maximal peak.
	However, the condition $\PLCP[L[k]] < L[k+1]$ must hold for $L[k+1]$ to be interesting.
\end{proof}
Next, we examine the result of creating the referencing factor $T[\dst \twodots \dst+\PLCP[\dst]-1]$ starting at the maximal peak~$\dst$.
After creating this factor, the \PLCP{} values of peaks near $\dst$ can be decreased.
However, this causes at most one new peak as can be seen by the following lemma:

\begin{lemma}\label{lemNoNewPeaks}
	Applying Rules~\ref{itPLCPdecrease} and~\ref{itPLCPreplace} after creating a referencing factor~$F_x$
	does not cause new peaks, with the only possible exception of the position succeeding the end of~$F_x$.
\end{lemma}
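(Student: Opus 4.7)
The plan is to enumerate, for every text position $p \neq \dst + \PLCP[\dst]$, the three conditions in the definition of a peak and verify that none of them is newly fulfilled for $p$ after applying Rules~\ref{itPLCPdecrease} and~\ref{itPLCPreplace}. Throughout, I would denote by $\PLCP'$ the array after the rules and write $[\alpha, \beta]$ for the factor range $[\dst, \dst + \PLCP[\dst] - 1]$.

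First, I would observe that both rules can only \emph{decrease} entries of $\PLCP$. Consequently, if $\PLCP[p] < \threshold$ before the update, then $\PLCP'[p] < \threshold$ afterwards, so a position can only become a peak if it already satisfies the threshold condition. This immediately takes care of Condition~(1): position $1$ becomes a peak only if $\PLCP'[1] \geq \threshold$, which requires $\PLCP[1] \geq \threshold$, and in that case position~$1$ was already a peak before. Condition~(3) is new exclusively where a factor now ends, i.e.~at position $p = \dst + \PLCP[\dst]$, which is precisely the exception allowed by the lemma.

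The heart of the argument is Condition~(2), namely $\PLCP'[p-1] < \PLCP'[p]$ with $p$ not previously a peak (so in particular $\PLCP[p-1] \geq \PLCP[p]$). I would split into cases according to the location of $p$:
\begin{itemize}
    \item If $p \in [\dst, \dst+\PLCP[\dst]-1]$, then Rule~\ref{itPLCPreplace} forces $\PLCP'[p]=0$, hence $p$ is not a peak.
    \item If $p \geq \dst + \PLCP[\dst] + 1$, neither $\PLCP[p]$ nor $\PLCP[p-1]$ is touched by the rules, so nothing changes.
    \item If $p = \dst - \PLCP[\dst]$, then $\PLCP[p-1]$ is not changed by Rule~\ref{itPLCPdecrease}, while $\PLCP'[p] \leq \PLCP[p] \leq \PLCP[p-1] = \PLCP'[p-1]$, so $p$ is still not a peak.
    \item If $p \in (\dst-\PLCP[\dst], \dst)$, then both $\PLCP[p-1]$ and $\PLCP[p]$ are modified by Rule~\ref{itPLCPdecrease}. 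Here I would use the fact that Rule~\ref{itPLCPdecrease} caps these entries by $\dst-(p-1)$ and $\dst-p$, respectively, values that differ by exactly one. A short case distinction on whether the cap is active at position $p-1$ then shows $\PLCP'[p-1] \geq \PLCP'[p]$, again ruling out a new peak.
\end{itemize}

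The main obstacle is the last bullet, since there both entries move simultaneously and the order relation must be preserved. The key insight I would highlight is that the cap imposed by Rule~\ref{itPLCPdecrease} is a strictly increasing function of the position index, so it can never turn the inequality $\PLCP[p-1] \geq \PLCP[p]$ around. With all four cases checked and Conditions~(1) and~(3) handled as above, no position other than $\dst + \PLCP[\dst]$ can acquire peak status, which is exactly the claim.
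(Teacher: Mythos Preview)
Your proposal is correct and follows essentially the same route as the paper: the paper compresses your fourth bullet into the single line $\PLCP'[p-1]=\min(\PLCP[p-1],\dst-(p-1))\ge\min(\PLCP[p],\dst-p)=\PLCP'[p]$, which is exactly your ``cap'' argument, and otherwise leaves the trivial ranges implicit. One small slip to fix: the cap $\dst-p$ is strictly \emph{decreasing} in the position index, not increasing---but that is precisely the direction needed, and your explicit values $\dst-(p-1)>\dst-p$ are already correct.
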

\begin{proof}
	Let $\dst$ be the starting position of the referencing factor~$F_x$ and
	let $j < \dst$ be a position that is not a peak at the time before the creation of $F_x$.
	Then $\PLCP[j-1] \ge \PLCP[j]$.
	After creating~$F_x$, it holds that
	\[
	\PLCP'[j-1] = \min\tuple{\PLCP[j-1], \dst - j} \ge \min\tuple{\PLCP[j], \dst - j - 1} = \PLCP'[j],
	\]
	where $\PLCP'$ is the modified \PLCP{} array after applying Rules~\ref{itPLCPdecrease} and~\ref{itPLCPreplace}.
	Hence, position~$j$ did not become a peak.
	If $j = \dst + \PLCP[\dst]$ is the position succeeding the end of~$F_x$, then $\PLCP[\dst+\PLCP[\dst]-1]=0$ according to Rule~\ref{itPLCPreplace}.
	Hence, $j$ becomes a peak if $\PLCP[j] \ge \threshold > 0$.
\end{proof}

\begin{figure}[t]
    \centering
    \begin{subfigure}[b]{0.45\textwidth}
        \includegraphics[width=\textwidth]{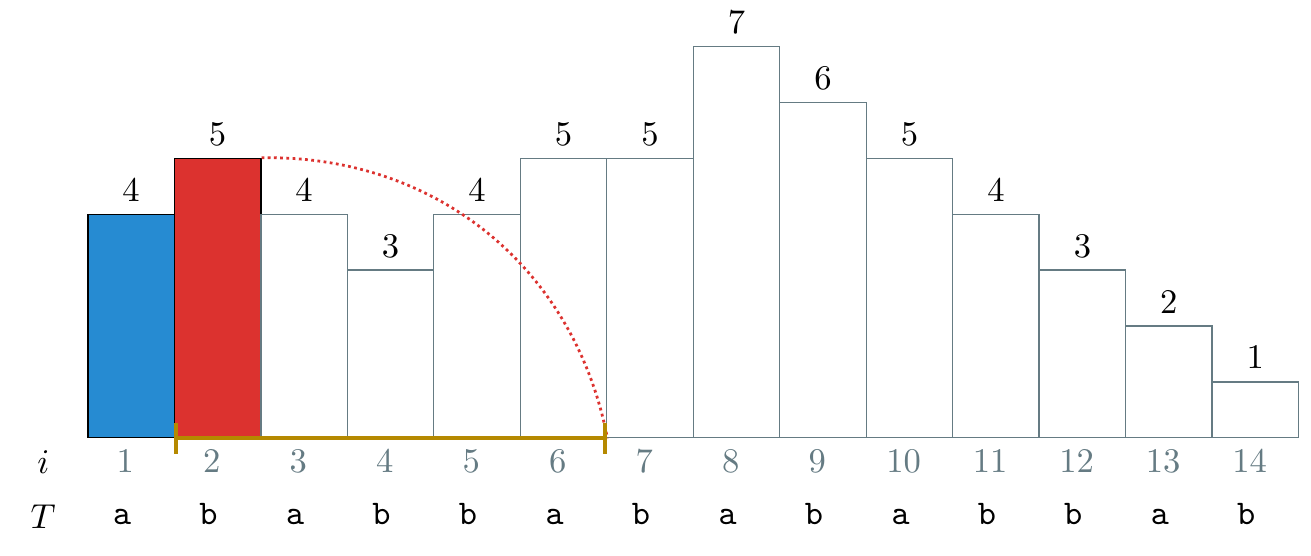}
        \caption{A maximal peak has been detected at $i=2$, an interesting peak is at $i=1$.}
        \label{figPLCPAlgoExample1}
    \end{subfigure}
    \hfill
    \begin{subfigure}[b]{0.45\textwidth}
        \includegraphics[width=\textwidth]{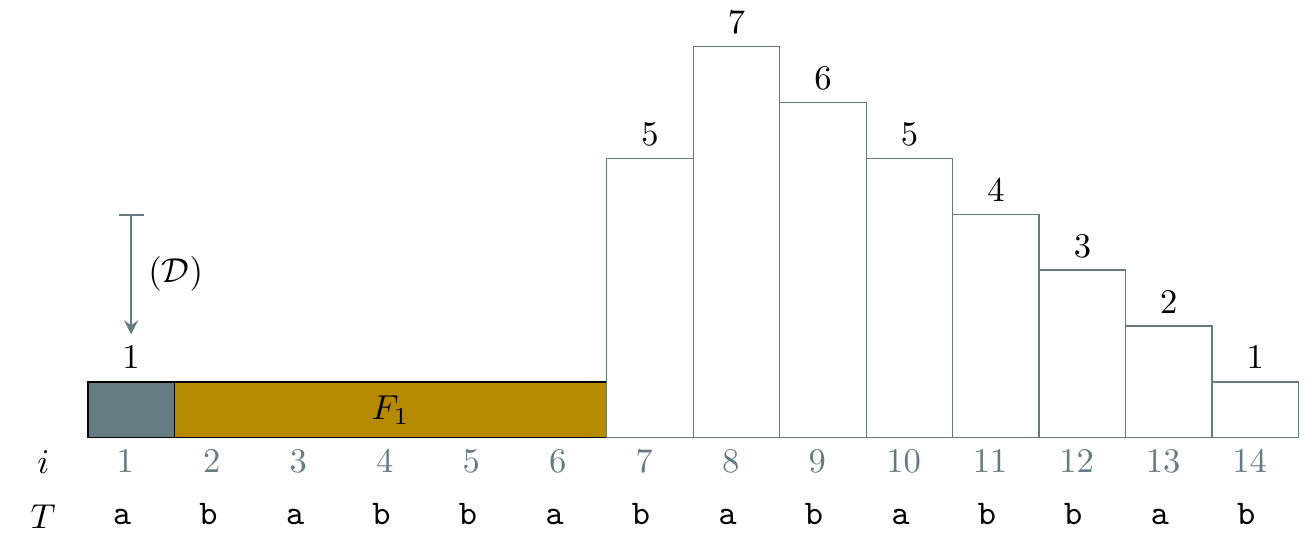}
        \caption{The referencing factor $F_1$ is introduced and Rule~\ref{itPLCPdecrease} is applied to the peak at $i=1$.}
        \label{figPLCPAlgoExample2}
    \end{subfigure}

    \vspace{2ex}
    \begin{subfigure}[b]{0.45\textwidth}
        \includegraphics[width=\textwidth]{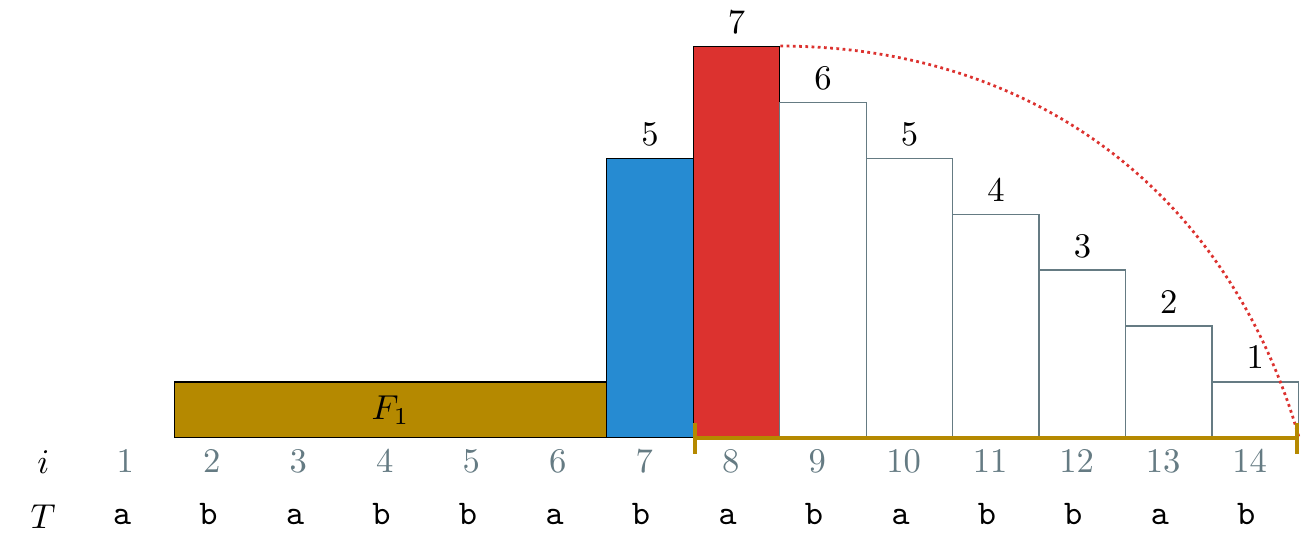}
        \caption{A maximal peak has been detected at $i=8$, an interesting peak is at $i=7$.}
        \label{figPLCPAlgoExample3}
    \end{subfigure}
    \hfill
    \begin{subfigure}[b]{0.45\textwidth}
        \includegraphics[width=\textwidth]{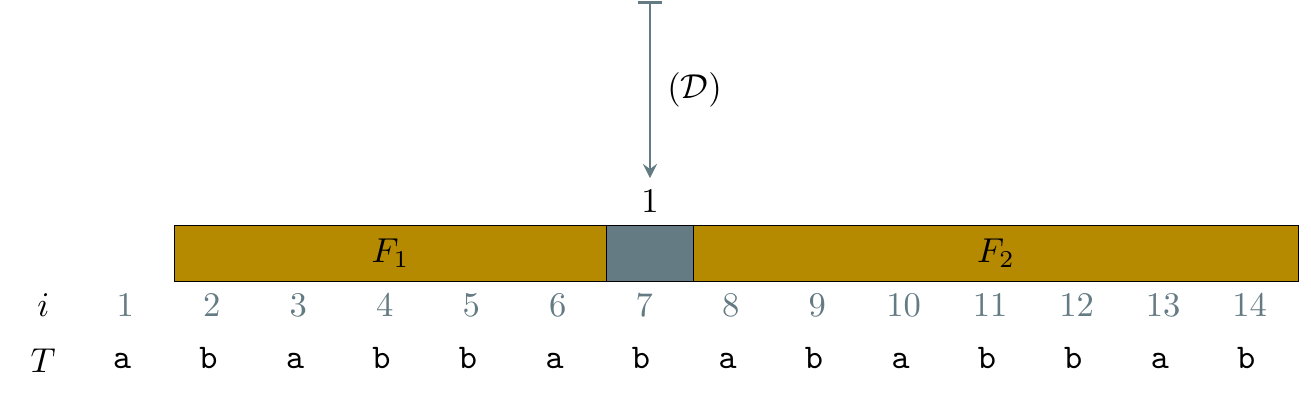}
        \caption{The referencing factor $F_2$ is introduced and Rule~\ref{itPLCPdecrease} is applied to the peak at $i=7$.}
        \label{figPLCPAlgoExample4}
    \end{subfigure}
    \caption{%
      Execution of our algorithm of \cref{secPLCPcompAlgo} computing the \protect\iPlcpcomp{} compression scheme on $T = \texttt{ababbabababbabbaababa\$}$.
        Due to limited space, we only illustrate the processing of the prefix $T[1 \twodots 14]$ in three steps (explained in \cref{exPlcpAlgo}).
        The vertical bars represent the \PLCP{} array, with the corresponding values written above, in text order from left ($i=1$) to right ($i=14$).
The shaded vertical bars represent the (current) \PLCP{} value of an interesting peak. Horizontal bars represent (referencing) factors.
In (b), the factor $F_1$, starting at position~$2$, is displayed as the maximal peak being \emph{tipped over} to the right.
    }
    \label{figPLCPAlgoExample}
\end{figure}

Since Rule~\ref{itPLCPdecrease} decreases at most the values of $\PLCP[\dst-\PLCP[\dst] \twodots \dst-1]$,
the highest peak~$\dst'$ in $\PLCP[1 \twodots \dst-1]$ is an interesting peak that is either
\begin{itemize}
  
    \item in the interval $[\dst-\PLCP[\dst] \twodots \dst-1]$, or,
    \item  in the case that all interesting peaks in $[\dst-\PLCP[\dst]\twodots{}\dst-1]$ are no longer interesting after decreasing their $\PLCP$ values,
       the rightmost peak preceding $\dst-\PLCP[\dst]$
       (whose \PLCP{} value is equal to the \PLCP{} value of the last peak removed from $L$ in Step~\ref{itAlgoApplyRules}).
\end{itemize}

\goodbreak

\noindent We can locate~$\dst'$ while applying Rule~\ref{itPLCPdecrease} as a result of creating the factor starting at $\dst$.
After locating~$\dst'$, we apply the following steps recursively:
\begin{enumerate}
    \item Substitute $T[\dst' \twodots \PLCP[\dst']-1]$ with a reference, because it is the highest peak in $T[1 \twodots \dst-1]$.
    \item If $\dst'' := \dst'+\PLCP[\dst']$ with $\PLCP[\dst'']$ $\ge$ $\threshold$ was not a peak, then~$\dst''$ becomes an interesting peak.
        In this case, substitute $\dst'$ with $\dst''$ in $L$ to preserve the order in~$L$.
        Otherwise, remove~$\dst'$ from~$L$.
    \item Split~$L$ into two sub-lists:
        \begin{itemize} 
            \item one containing text positions of the range $[1 \twodots \dst'-1]$, and
            \item the other containing text positions of the range $[\dst'+\PLCP[\dst'] \twodots \dst-1]$.
        \end{itemize}
    \item Recurse on each of the two sub-lists, i.e., find the highest peak in each sub-list and substitute it.
\end{enumerate}
This recursion is more efficient than the while-loop described in Lines~\ref{lineAlgStartProcessL} to~\ref{lineAlgEndProcessL} of \cref{algoRoutineAPrime}.

\begin{lemma}\label{lemValidPlcpcomp}
  The algorithm emits a valid \iPlcpcomp{} factorization of $T[1 \twodots \dst+\PLCP[\dst]-1]$.
\end{lemma}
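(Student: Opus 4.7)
The plan is to prove the lemma by induction on the number of factors that the algorithm emits. The base case is the creation of the factor at the leftmost maximal peak $\dst$, which is a valid \iPlcpcomp{} factor by \cref{lemGoodPeak}. Once this factor is emitted, I invoke the independence property already mentioned in the text: since no subsequent factor can cross into $[\dst \twodots \dst+\PLCP[\dst]-1]$, the remaining factors of the \iPlcpcomp{} scheme restricted to $T[1 \twodots \dst+\PLCP[\dst]-1]$ are exactly the \iPlcpcomp{} factors of $T[1 \twodots \dst-1]$ with the modified $\PLCP$ obtained from Rules~\ref{itPLCPdecrease} and~\ref{itPLCPreplace}. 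Thus it remains to show that the recursive processing of $L$ produces precisely those factors.

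The second step is to argue that every starting position of an \iPlcpcomp{} factor in $T[1\twodots\dst-1]$ is present in $L$ at the time it is processed. By \cref{lemNoNewPeaks}, applying Rules~\ref{itPLCPdecrease} and~\ref{itPLCPreplace} after any factor cannot create new peaks except possibly at $\dst+\PLCP[\dst]$, which lies outside the current range. Moreover, a position that is not an interesting peak either satisfies $\PLCP[i-1] \ge \PLCP[i]$ (so the leftmost-maximum rule of \iPlcpcomp{} would select a position $\le i-1$ first and Rule~\ref{itPLCPreplace} would erase $\PLCP[i]$) or is covered by a position $j$ with $\PLCP[j] \ge \PLCP[i]$ (so either it is erased by Rule~\ref{itPLCPreplace} when $j$ is factorized, or its value is decreased below $\threshold$ by Rule~\ref{itPLCPdecrease}). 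Hence no such position can start a factor of length $\ge \threshold$, so all relevant starting positions are interesting peaks, and all interesting peaks in $T[1 \twodots \dst-1]$ were inserted into $L$ during the left-to-right scan.

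For the third step, I show that the recursion on $L$ picks the correct position at each call. By \cref{lemLAscendingOrder}, $L$ stores its elements in strictly ascending order of their $\PLCP$ values, so the rightmost entry $\dst'$ of $L$ has the maximum $\PLCP$ value among all interesting peaks in the current range; combined with the previous step, $\dst'$ is a maximal peak of the current range, and $T[\dst' \twodots \dst'+\PLCP[\dst']-1]$ is a valid \iPlcpcomp{} factor by \cref{lemGoodPeak}. After emitting this factor and updating $L$ (replacing $\dst'$ by the possibly-new interesting peak $\dst''=\dst'+\PLCP[\dst']$, or removing $\dst'$), I split $L$ at $\dst'$; by \cref{lemNoNewPeaks}, no other new peaks arise, and \cref{lemLAscendingOrder} applied to each sub-list (which still consists of interesting peaks preceding the leftmost maximal peak of that sub-range) yields the invariant needed for the recursive calls.

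The main obstacle will be the bookkeeping around the update of $L$ after emitting a factor at $\dst'$. Specifically, I must verify that (i)~if $\dst''$ is a new interesting peak, then inserting it in place of $\dst'$ preserves the ascending-order invariant of \cref{lemLAscendingOrder} in both sub-lists; (ii)~the $\PLCP$ values of peaks in $L$ lying in $[\dst'-\PLCP[\dst'] \twodots \dst'-1]$, after being decreased by Rule~\ref{itPLCPdecrease}, either remain interesting peaks (and fit into the left sub-list with the correct ordering) or drop below $\threshold$ and are correctly removed; and (iii)~the right sub-list, consisting of peaks in $(\dst' \twodots \dst-1]$, is untouched by Rule~\ref{itPLCPdecrease} applied to $\dst'$'s factor and hence directly inherits the invariant. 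Once these case analyses are settled, the induction closes and the emitted sequence is exactly the \iPlcpcomp{} factorization of $T[1 \twodots \dst+\PLCP[\dst]-1]$.
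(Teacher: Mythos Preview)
Your proposal is correct and follows essentially the same route as the paper. In the paper, \cref{lemValidPlcpcomp} is not given a separate formal proof; the argument is the informal discussion preceding it, which assembles exactly the ingredients you use (\cref{lemGoodPeak} for the factor at the maximal peak, \cref{lemLAscendingOrder} for locating the next highest peak in~$L$, \cref{lemNoNewPeaks} to rule out new peaks except at $\dst'+\PLCP[\dst']$, and the split of $L$ into two independent sub-lists for recursion). Your plan is a more explicitly structured version of that same argument, and your ``main obstacle'' paragraph correctly isolates the bookkeeping that the paper leaves implicit.
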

After factorizing $T[1 \twodots \dst+\PLCP[\dst]-1]$,
we proceed with \cref{algoRoutineAPrime} on the remaining text $T[\dst+\PLCP[\dst] \twodots]$ to 
compute the factorization of the entire text.
It is left to explain how this algorithm can be adapted to the EM model efficiently.

\subsection{Factorization in External Memory}\label{secEMPLCPcomp}
Having the text, \PLCP{}, and $\Phi$ stored as files in EM,
we can compute the \iPlcpcomp{} scheme in three sequential scans over $n$ tuples and one sort operation:
\begin{enumerate}
\item Proceed with \cref{algoRoutineAPrime} to find pairs $(\dst, \ell = \PLCP[\dst])$ representing referencing factors $T[\dst\twodots\dst+\ell]$ by
scanning \PLCP{}.
\item Sort these pairs in ascending order of their $\dst$ components (i.e., in text order).
\item Simultaneously scan this sorted list of pairs and $\Phi$ to compute triplets of the form $(\dst, \src = \Phi[\dst], \ell)$, where the second component is the referred position of the referencing factor~$T[\dst\twodots\dst+\ell-1]$.
\item Finally, scan simultaneously the list of references and $T$ to replace each substring~$T[\dst \twodots \dst+\ell-1]$ by the reference $(\src, \ell)$
on reading the triplet $(\dst, \src, \ell)$.
\end{enumerate}

The pairs emitted during the \PLCP{} scan (Step~1) can be stored and then sorted in EM\@.
The references computed by the second scan can be written to disk for the final scan, which computes the \iPlcpcomp{} scheme of $T$ sequentially.
By doing so, no random access is required on the list of references.

During the \PLCP{} scan, the list $L$ can also be maintained on disk efficiently: until a maximal peak is found, we only append peaks to $L$.

Once a maximal peak $\dst$ has been found and a reference $(\dst, \ell)$ is emitted,
we scan over $L$ sequentially (a) to apply Rules~\ref{itPLCPdecrease} and~\ref{itPLCPreplace} and
(b) to find a remaining maximal peak, if any, in the process.
We then repeat this process until there are no more maximal peaks in $L$.
In practice, we scan the last elements of $L$ linearly from right to left, since only the last interesting peaks need to be updated.
For our experiments, we store~$L$ in RAM, as the number of elements was much lower than the following upper bound:

\begin{lemma}\label{lemmaListBounds}
   $\abs{L} = \Oh{\min(\sqrt{n \lg n}, r)}$, where $r$ is the number of BWT runs.
\end{lemma}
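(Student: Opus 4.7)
The plan is to prove the two halves of the minimum separately.

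For the $\sqrt{n\lg n}$ bound, I would combine \cref{lemLAscendingOrder} with the classical result of Kärkkäinen, Manzini, and Puglisi that the sum of irreducible PLCP values in the original text is $O(n\lg n)$. Writing $k=|L|$ and letting its entries be $p_1<\cdots<p_k$, \cref{lemLAscendingOrder} gives strictly increasing current PLCP values $\PLCP[p_1]<\cdots<\PLCP[p_k]$, each at least $\threshold\ge 2$, so
\[
\sum_{i=1}^k\PLCP[p_i]\ \ge\ 2+3+\cdots+(k+1)\ =\ \Omega(k^2).
\]
For the matching upper bound I would note that each $p_i$, being an interesting peak, is irreducible in the current PLCP, and that Rules \ref{itPLCPdecrease} and \ref{itPLCPreplace} only decrease PLCP values. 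With an injective charging of the entries of $L$ to positions irreducible in the \emph{original} PLCP, this gives $\sum_{i=1}^k\PLCP[p_i]=O(n\lg n)$, and solving $k^2=O(n\lg n)$ yields $k=O(\sqrt{n\lg n})$.

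For the $r$ bound, I would rely on the Kärkkäinen--Manzini--Puglisi correspondence between the $r$ irreducible PLCP positions of $T$ and the $r$ BWT run-heads. Every interesting peak $p\in L$ is irreducible in the current PLCP (since either $p=1$, $\PLCP[p-1]<\PLCP[p]$, or a factor ends at $p-1$). Peaks that are already irreducible in the original PLCP inject into the $r$ run-heads directly. The remaining peaks are exactly post-factor peaks, which by \cref{lemNoNewPeaks} are the only peaks the algorithm can create. For each such post-factor peak $p=\dst+\PLCP[\dst]$ I would inject it into a unique original irreducible position (for instance, the nearest original peak whose shadow once covered the factor), using that Rule \ref{itPLCPreplace} zeros only positions \emph{inside} the factor, so the original value $\PLCP[p]$ is preserved throughout, together with the strict ordering provided by \cref{lemLAscendingOrder} to preclude collisions.

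The main obstacle in both bounds is defining the injective charging cleanly, particularly for post-factor peaks, since these are not intrinsically irreducible in the original PLCP and only become peaks via the algorithm's rules. The strict monotonicity of PLCP values along $L$ (\cref{lemLAscendingOrder}) rules out two elements of $L$ being charged to the same original irreducible position, and the preservation of $\PLCP[p]$ at post-factor peaks $p$ provides the link back to an original irreducible position. Turning this link into a canonical, verifiably injective map is the technical heart of the argument.
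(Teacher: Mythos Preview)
Your high-level approach coincides with the paper's: use \cref{lemLAscendingOrder} to get $\sum_i \PLCP[p_i] = \Omega(k^2)$, bound this sum by $O(n\lg n)$ via the irreducible-PLCP theorem of K\"arkk\"ainen et al., and separately bound $|L|$ by $r$ via the BWT-run correspondence. The paper's own proof is terse; it simply asserts $\sum_i \PLCP[j_i] = O(n\lg n)$ and $m = O(r)$ by citation, without spelling out why the entries of $L$ are (or can be charged to) irreducible positions.

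You correctly sense that post-factor peaks (your case~(3)) are the only obstacle, but you overestimate its difficulty and leave the argument unfinished. The observation you are missing is that $|L|$ is maximised just \emph{before} a maximal peak triggers the while-loop of \cref{algoRoutineAPrime}: during that loop $|L|$ can only decrease (elements are removed or substituted one-for-one), and afterwards $L$ is empty. Hence the contents of $L$ at its maximum were collected by a single left-to-right pass over the \emph{original} $\PLCP$, starting either at position~$1$ or at the position immediately following the previously created factor. No factor is created during this pass, so the \emph{only} element of $L$ that can be a post-factor peak is its very first entry~$p_1$; every $p_i$ with $i\ge 2$ satisfies $\PLCP[p_i-1]<\PLCP[p_i]$ in the original array and is therefore irreducible. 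This makes the ``charging'' trivial: $|L|\le r+1=O(r)$, and $\sum_{i\ge 2}\PLCP[p_i]\le\sum_{j\text{ irreducible}}\PLCP[j]=O(n\lg n)$ together with \cref{lemLAscendingOrder} already yields $k=O(\sqrt{n\lg n})$. No general injective map covering ``many'' post-factor peaks is required.
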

\begin{proof}
The list $L$ stores all interesting peaks between two different maximal peaks (or between the first position and the first maximal peak).
Given an interesting peak~$\dst$ with $\PLCP[\dst]$, there is no peak~$j$ with $\PLCP[j] \ge \PLCP[\dst]$ and $j < \dst < j + \PLCP[j]$.
In order to be added to $L$, the peak~$\dst$ must not be a maximal peak, i.e., there must be a
text position~$j$ with $\dst < j < \dst+\PLCP[\dst]$ and $\PLCP[j] > \PLCP[\dst]$.
The worst case is that $j = \dst+1$, $\PLCP[j] = \PLCP[\dst]+1$, and $j$ is again an interesting peak that is not maximal.
By induction, we may insert $m$ interesting non-maximal peaks $\menge{j_i}_{1 \le i \le m}$ into $L$ with $j_{i}+1 \le j_{i+1}$ for $1 \le i \le m-1$ and $\PLCP[j_i] \ge i$ for $1 \le i \le m$.

However, $\sum_{i=1}^m i \le \sum_{i=1}^m \PLCP[j_i] = \Oh{n \lg n}$ due to~\cite[Thm.\ 12]{karkkainen16irreducible}, such that $m = \Oh{\sqrt{n \lg n}}$.
From the same reference~\cite[Sect.\ 4]{karkkainen16irreducible}, we obtain that $m = \Oh{r}$.
\end{proof}

\begin{lemma}
There are texts of length~$n$ for which $\abs{L} = \Ot{\sqrt{n}}$.
\end{lemma}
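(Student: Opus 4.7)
The plan is to prove the lower bound by exhibiting, for every integer $m \ge 1$, an explicit text $T_m$ of length $n = \Theta(m^2)$ on which the list $L$ of \cref{algoRoutineAPrime} attains size $m - 1 = \Theta(\sqrt n)$ during the scan. The combinatorial target is a strictly increasing \emph{PLCP staircase}: text positions $p_1 < p_2 < \cdots < p_m$ that are interesting peaks with $\threshold \le \PLCP[p_1] < \PLCP[p_2] < \cdots < \PLCP[p_m]$, satisfying the density condition $p_{i+1} < p_i + \PLCP[p_i]$ for every $i < m$, and such that every position outside the staircase has a PLCP value too small to give rise to an interesting peak before $p_m$ is reached.

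Given such a staircase, the claimed behaviour is immediate. Each $p_i$ is interesting: the strict monotonicity blocks any earlier staircase position from covering it, and the auxiliary assumption rules out all other covers. Each $p_i$ with $i < m$ fails to be maximal because $p_{i+1}$ is itself an interesting peak lying strictly between $p_i$ and $p_i + \PLCP[p_i]$ by the density condition. Hence $p_m$ is the first maximal peak encountered; during the scan up to $p_m$ the algorithm inserts $p_1, p_2, \ldots, p_{m-1}$ into $L$ in ascending PLCP order, as guaranteed by \cref{lemLAscendingOrder}, so $|L| = m - 1$ at that moment. Choosing $\PLCP[p_i] = i + 1$ makes $p_{i+1} - p_i \le i$, so the staircase spans $\sum_{i=1}^{m-1} i = \Theta(m^2)$ text positions and hence $n = \Theta(m^2)$, yielding $|L| = \Theta(\sqrt n)$.

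The main obstacle is the concrete realisation of the staircase by an actual string. The plan is to construct $T_m$ as a concatenation of $m$ carefully chosen repetitive blocks, each introducing one new staircase peak with the prescribed PLCP value, followed by a short ``reset'' tail that keeps PLCP values elsewhere below the staircase and prevents spurious covers. Correctness will be verified inductively: extending $T_{m-1}$ to $T_m$ by adding one block should create exactly one new staircase step while leaving the lower ones and the reset tail essentially undisturbed. The bulk of the technical work is the suffix-array bookkeeping that shows how each added block perturbs $\Phi$ and $\PLCP$ only in controlled ways; this is where the argument becomes delicate, since the construction and its verification are entangled and a single misplaced character can create an extra peak that prematurely makes one $p_i$ maximal and therefore truncates the growth of $L$.
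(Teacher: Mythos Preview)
Your high-level strategy is exactly right and coincides with the paper's: build a text whose \PLCP{} array contains a strictly increasing staircase of $\Theta(\sqrt{n})$ interesting, non-maximal peaks, so that \cref{algoRoutineAPrime} must accumulate all of them in $L$ before encountering the first maximal peak. The reduction from the staircase properties to $|L| = \Theta(\sqrt{n})$ is correctly argued.

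The gap is that you stop short of the construction itself. You state the desiderata and promise an inductive assembly of ``repetitive blocks'' with a ``reset tail'', while yourself conceding that the suffix-array verification will be delicate and that construction and verification are ``entangled''. The paper dissolves this entanglement by letting the alphabet grow with~$m$: with $\Sigma = \{\sigma_1 < \cdots < \sigma_m\}$ it sets $F_m := \sigma_m$, $F_i := \sigma_i F_{i+1} \sigma_i$ for $i < m$, and $T := F_m F_{m-1} \cdots F_1$. The nesting makes every $F_{i+1}$ a prefix (after one character) of $F_i$, so the starting position of each block is automatically a peak, and the distinct boundary characters~$\sigma_i$ make $\Phi$ and \PLCP{} at those positions directly computable---no induction on~$m$, no reset tail, no risk of a stray extra peak. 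The block lengths $1,3,5,\ldots,2m{-}1$ sum to $m^2$, so $n = \Theta(m^2)$ and the $\Theta(m)$ block starts give $|L| = \Theta(\sqrt n)$. The growing-alphabet trick is precisely the missing idea that turns the ``delicate'' part you anticipate into a one-page calculation; an attempt over a fixed alphabet along the lines you sketch is possible but genuinely harder to get right.
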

\begin{proof}
For the proof, we use the following definition:
Given an interval~$I$, we define $\ibeg{I}$ and $\iend{I}$ to be the starting and the ending position of $I = [\ibeg{I}\twodots\iend{I}]$, respectively.

Let $\Sigma := \menge{\sigma_1,\ldots,\sigma_m}$ be an alphabet with $\sigma_1 < \sigma_2 < \ldots < \sigma_m$.
Set $F_m := \sigma_m$, and $F_i := \sigma_i F_{i+1} \sigma_i$ for $1 \le i \le m-1$.
Then our algorithm fills $L$ with \Ot{\sqrt{n}} interesting peaks on processing the text $T := F_m \cdots F_1$.
This is due to the following:

First, $\Phi[\ibeg{F_i}] = \ibeg{F_{i+1}}+1$ for each $i$ with $1 \le i \le m-1$, since
 \begin{itemize}
\item $T[\ibeg{F_i}\twodots] = F_i F_{i-1} \cdots = F_i \sigma_{i-1} F_i \sigma_{i-1} \cdots$ and
   \item $T[\ibeg{F_{i-j}}+j\twodots] = F_i \sigma_{i-1} \cdots \sigma_{i-j}$ for all $j$ with $0 \le j \le i-1$
 \end{itemize}
Hence,
$
T[\ibeg{F_{i-j}}+j\twodots]
\prec T[\ibeg{F_{i-1}}+1\twodots]
= F_i \sigma_{i-1} F_{i-2} \cdots
= F_i \sigma_{i-1} \sigma_{i-2} F_{i-1} \sigma_{i-2} \cdots
\prec T[\ibeg{F_i}\twodots]
$
for all $j$ with $2 \le j \le i-1$.
For all positions $1 \le j \le n$, we have $\lcp(T[j\twodots], T[\ibeg{F_i}\twodots]) \le \lcp(T[\ibeg{F_{i}}\twodots], T[\ibeg{T_{i-1}}+1\twodots]) = \abs{F_i}+1 = 2i$.
Hence, $\PLCP[\ibeg{F_i}] = 2i$ for each $i$ with $1 \le i \le m-1$.
Similarly, we obtain $\PLCP[\ibeg{F_i}+j] = 2i-j$  for each $j$ with $0 \le j \le \abs{F_i}$ and
$\PLCP[\iend{F_i}] = 2$ for each $i$ with $1 \le i \le m-1$.
We conclude that the text positions~$\ibeg{F_i}$ are interesting peaks, for $1 \le i \le m-1$.
Moreover, $\ibeg{F_{m-1}}$ is a maximum peak, since
$T[\ibeg{F_m}] = \sigma_1$ occurs only at $T[\ibeg{F_m}]$ and at the last text position $\iend{F_1}$ such that $\PLCP[\ibeg{F_m}] = 1$.

Finally, the algorithm collects $m-2$ interesting peaks before finding the maximal peak at text position $\ibeg{F_{m-1}}$.
Since $\abs{F_i} = 2i-1$, we have $\sum_{i=1}^m \abs{F_i} = \sum_{i=1}^m (2i-1) = n$, which holds for $m = \Ot{\sqrt{n}}$.
\end{proof}

 \section{Decompression}\label{sec:decompression}
The task of decompressing a bidirectional scheme is to \intWort{resolve} each reference~$(\src_i,\ell_i)$ of a referencing factor $T[\dst_i \twodots \dst_i+\ell_i-1$], i.e., to copy the characters from $T[\src_i \twodots \src_i+\ell_i-1]$ to $T[\dst_i \twodots \dst_i+\ell_i-1]$.

A unidirectional scheme can be decompressed by scanning linearly over the compressed input from left to right.
In that scenario, references can be resolved easily because they always refer to already decompressed parts of the text \cite{belazzougui16decoding}.
This property does not hold for a bidirectional scheme in general, as a reference can refer to a part of the text that again corresponds to a reference.
\vspace{.5em}
\begin{definition}[Dependency Graph]\label{def:depgraph}
Given a bidirectional factorization $F_1 \cdots F_b = T$, 
we model its references as a directed graph~$G$ with $V = \{v_1, \ldots, v_b\}$ such that there is a 1-to-1 correlation between nodes~$v_i$ and factors~$F_i$.
We add a directed edge $(v_i, v_j)$ from~$v_i$ to~$v_j$ with $i \ne j$ iff $F_i$ refers to at least one character in the factor~$F_j$.%
We put these edges into a set~$E$ to form a graph $G := (V,E)$ that has only literal factors as sinks.
A node $v_i$ can have more than one out-going edge if the referred substring is covered by multiple factors; in this case, we say $v_i$ is \intWort{multi-dependent} and call the set of its out-going edges a \intWort{multi-dependency}. The dependency graph of our example from \cref{figPLCPfactorization} can be seen in \cref{fig:depgraph}.
\end{definition}

\begin{figure}
    \vspace{.5em}
    \centering{%
        \begin{adjustbox}{valign=c}
            \includegraphics[width=0.3\textwidth]{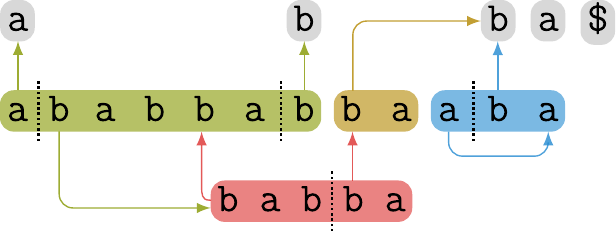}
        \end{adjustbox}
        \hspace{2cm}
        \begin{adjustbox}{valign=c}
            \includegraphics[width=0.46\textwidth]{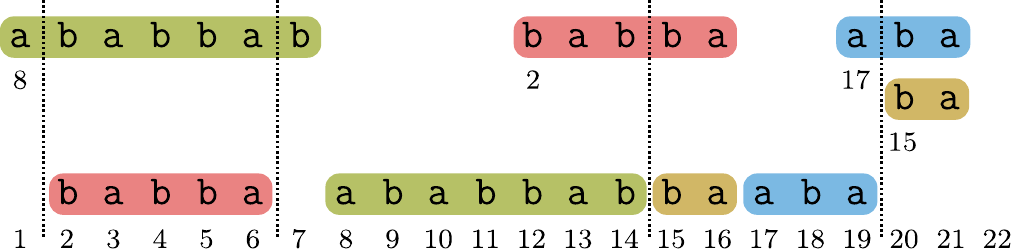}
        \end{adjustbox}
    }
    \caption{%
      The dependency graph (\emph{left}) and its EM representation  (\emph{right}) of the factorization given in \cref{figPLCPfactorization}.
      The multi-dependent factors of length seven and five have a cyclic dependency.
      The EM representation of the graph described in \cref{sec:decompression} consists of two copies of the list of all referencing factors, sorted by their source position (\emph{top}) as well as sorted by their destination (\emph{bottom}).
    }
    
    \label{fig:depgraph}
\end{figure}

Bidirectional decompressors face two challenges arising from this graph structure:
\begin{itemize}
     \item[\CustomLabel{itC1}{(C1)}]
     Long dependency chains (i.e., large values of $d(G)$) may affect the time and space complexity of decompression algorithms.
     \item[\CustomLabel{itC2}{(C2)}]
    The existence of multi-dependent nodes disallows efficient tree-based approaches.
\end{itemize}

In the remaining of this section, we present three strategies of attacking these issues, first individually (\cref{subsec:decomp-scan} and~\ref{subsec:compact}), and then together (\cref{subsec:pj}).
We focus on the resolution of indirect dependencies to obtain a dependency graph in which all references are direct children of literal factors.
After such a resolution, the text can be trivially recovered with $\sort(n)$~I/Os.

\subsection{Decompressor \texorpdfstring{\protect\strScan{}}{scan}}\label{subsec:decomp-scan}
The decompressor \strScan{} was introduced in~\cite[Sect.~3.2.2]{dinklage17tudocomp} (to which we refer for a detailed description).
In its main phase, \strScan{} avoids multi-dependencies by splitting each reference~$(\src,\ell)$ with $\ell > 1$ into references $(\src,1),\ldots,(\src+\ell-1,1)$, i.e., one for each character.
Then any undecoded position refers to either a literal factor or another reference.
Hence the underlying dependency graph becomes a forest,
which can conceptionally be resolved in \Oh{n}~time using standard traversal techniques.
The initial splitting may however increase the number of references by a factor of $\Oh{n}$ causing inefficiencies and a significant memory overhead (which \strScan{} tries to reduce heuristically by preprocessing).
This strategy is also similar to the parallel LZ77 decompressor of \citet[Sect.\ 4.2]{farach95parallel}.

\subsection{Optimizing the Coding for Decompression}\label{subsec:compact}
Orthogonally, we present the novel approach \strIMCompact{} to improve an existing bidirectional coding for decompression by shortening dependency chains (see the left sub-figure of \cref{fig:compaction}).
This approach neither changes the factorization nor does it convert a referencing factor into a literal.
It may be used directly after the compression step to accelerate future decompression.

Given a coding, we construct its dependency graph $G$ but omit all multi-dependencies. 
As a result, we obtain a forest in which each reference depends only on a unique predecessor as illustrated in \cref{fig:compaction} (middle).
Using a top-down traversal (e.g., BFS) on each tree individually, we can replace all chains with direct references to the root.
Building~$G$ and traversing it requires \Oh{b}~total time.

\begin{figure}
    \centering{%
        \includegraphics[width=\linewidth]{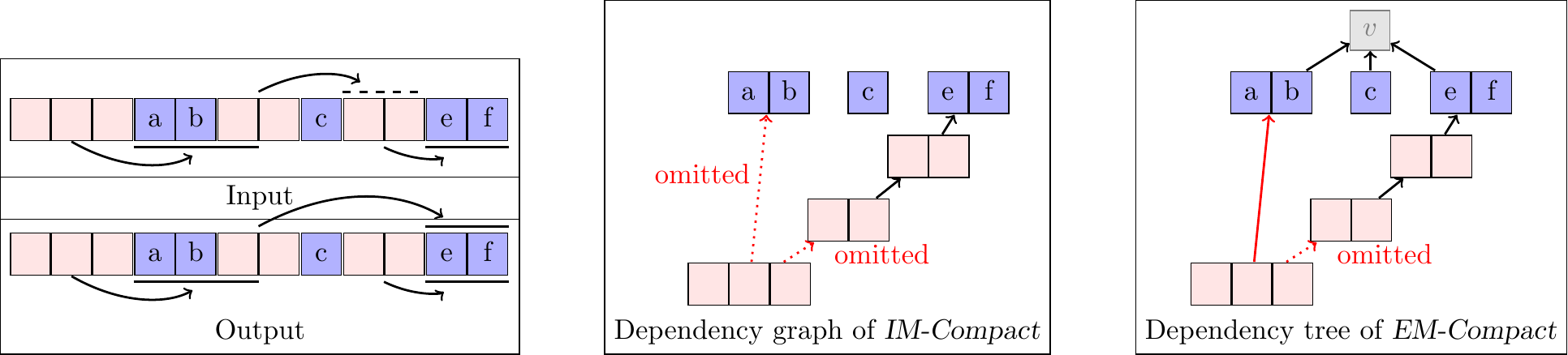}
      }
    \caption{%
        Compaction of a bidirectional scheme.
        \emph{Left}: The factors of the input are represented by maximal consecutive blocks of the same shading.
        In this example, the input consists of six factors.
        Referencing factors store no characters, have a light shading and an out-going arrow pointing to a vertical bar representing its corresponding reference.
        The first factor refers to two factors and is not resolved during compaction.
        The third factor refers to the fifth which refers to the sixth;
        this chain is compacted by redirecting the third factor to the sixth directly.
        \emph{Middle} and \emph{Right}:
        Dotted edges indicate dependencies with no corresponding edge in the algorithms described in \cref{subsec:compact}.
    }

    \label{fig:compaction}
\end{figure}

\noindent We now present \strEMCompact{}, an I/O-optimal variant of \strIMCompact{}:

\def\vecFac{\ensuremath{\mathsf{factors}}}
\def\pqSplit{\ensuremath{\mathsf{PQSplit}}}
\def\vecChld{\ensuremath{\mathsf{requests}}}
\def\vecNChld{\ensuremath{\mathsf{nextRequests}}}
\def\vecPQ{\ensuremath{\mathsf{PQ}}}
\def\vecRes{\ensuremath{\mathsf{result}}}

\begin{enumerate}[label={\textbf{Step~\arabic*}:},ref={Step~\arabic*},wide, labelwidth=!, labelindent=0pt] 
\item\label{it:construct-vectors} We first construct a representation of the dependency graph consisting of two EM vectors \vecChld{} and \vecFac{}.
Intuitively, each reference (child) sends a request message to the first factor it refers to (parent).
Addressing is implemented indirectly in terms of text positions rather than factor indices.
To this end,
for each reference $(\src, \ell)$ corresponding to a factor~$F_i = T[\dst\twodots\dst+\ell-1]$,
we push
(i) the tuple $\jbracket{\src, \ell, i}$ into \vecChld{}, and
(ii) the tuple $\jbracket{\dst, \ell, i}$ into \vecFac{}.
Additionally, each literal factor~$F_i = T[\dst\twodots\dst+\ell-1]$ contributes a tuple~$\jbracket{\dst, \ell, i}$ to \vecFac{}.
Subsequently, we sort\footnote{%
    To sort tuples we always use lexicographic order, i.e., we order tuples as implied by the first unequal element.}
both vectors independently, bringing the messages in \vecChld{} and the recipients in \vecFac{} into the same order.

\item\label{it:first-tree-pass} We now scan through \vecFac{} and \vecChld{} simultaneously.
  By doing so, each $F_i$ in \vecFac{} can gather all its children (requests):
a factor $F_i$ with tuple~\jbracket{\dst,\ell,i} has a child~\jbracket{\src,\ell',i'} if $\src \in [\dst, \dst+\ell)$.
The factor of this child~$F_{i'}$ is completely contained in~$F_i$ if $\src+\ell' \le \dst+\ell$.
Otherwise, $F_{i'}$ is multi-dependent.
In contrast to \strIMCompact{}, which discards such a multi-dependency completely,
\strEMCompact{} retains one edge to obtain a connected dependency tree simplifying \ref{it:list-ranking}.\footnote{%
    \strEMCompact{} keeps multi-dependent nodes despite its inability to optimize them. It does so because a subtree rooted in a multi-dependent node can contain optimizable dependency chains.}
To complete the tree, we add a virtual node $v$ and assign all literal factors as $v$'s children.
The resulting graph is a tree rooted in $v$ with $b{+}1$ nodes as illustrated in \cref{fig:compaction} (right).
Its construction requires $\sort(b)$~I/Os.%
\footnote{Depending on the encoding of the input, a scan over the content of literal factors may be necessary and trigger $\scan(n)$~I/Os.}

\item\label{it:list-ranking} Subsequently, we apply the Euler tour technique and list ranking~\cite[Sect.~3.6]{DBLP:conf/dagstuhl/MaheshwariZ02} on the tree built in~\ref{it:first-tree-pass} to calculate the depths of all nodes, triggering $\sort(b)$~I/Os.

\item\label{it:compact-resolution} With an additional tree traversal, we can finally update the referred positions.
For that, we annotate each tuple in \vecFac{} and \vecChld{} with the depth of its corresponding node.
Then the vectors are sorted by the depths of their items and, in case of equality, the order used in~\ref{it:construct-vectors}.
Similarly to~\ref{it:first-tree-pass}, we scan both vectors simultaneously to traverse the dependency tree.

Due to the order of both vectors, \strEMCompact{} processes nodes layer-wise and within each layer from left-to-right.
Thus, parents are processed before their children, and can inductively forward their referred-to positions to their children.
Following the time-forward processing~\cite[Sect.~3.4]{DBLP:conf/dagstuhl/MaheshwariZ02} technique, we transport those updates as messages in an EM priority queue~\vecPQ{}.

When processing node $F_i$ at depth $d$, we check whether a message of the form $\jbracket{d, i, \src_1, \dst_1}$ is at the top of \vecPQ{}.
If so, we dequeue it and update the referred position of~$F_i$ to $\src_1+\src_0-\dst_1$, where $\src_0$ is the former referred position of~$F_i$ as illustrated in \cref{figOneSplit}.
In any case, we iterate over all non multi-dependent children:
for each $F_j$, we push the message \jbracket{d{+}1, j, \src_1, \dst_1} into~\vecPQ{}.
\end{enumerate}

During each step, $\Oh{b}$ items are sorted and scanned, triggering $\sort(b)$~I/Os in total.
I/O-optimality follows by a reduction to the permutation problem analogously to the construction in~\cite[Thm. 1]{belazzougui16decoding}.

\begin{figure}
    \centering{%
        \includegraphics[scale=1.0]{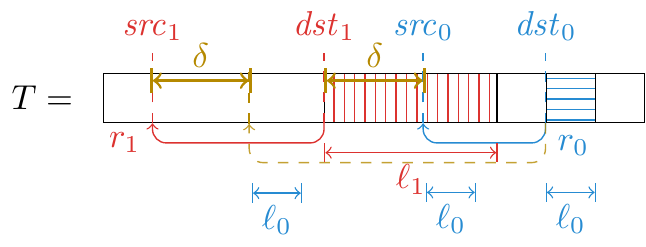}
        \hfill
        \includegraphics[scale=1.0]{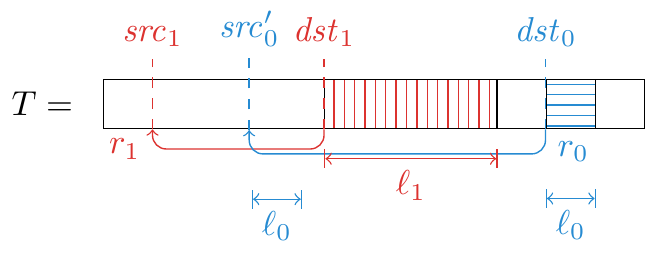}
    }%
    \caption{Pointer jumping of the reference $r_0 := (\src_0,\ell_0)$ belonging to the factor starting at $\dst_0$.
        We set the referred position of this reference to $\src_0' := \src_1+\delta$, where $\delta = \src_0-\dst_1$ and $T[\dst_1 \twodots \dst_1+\ell_1-1]$ is a factor with $\dst_1 \le \src_0 \le \src_0+\ell_0-1 \le \dst_1+\ell_1-1$ having the reference $r_1$ to text position $\src_1$.
       The left and the right picture show the setting before and after applying the pointer jumping, respectively.
         }
    \label{figOneSplit}
\end{figure}

\subsection{Decompressor \texorpdfstring{\protect\strPJ{}}{EM-PJ}}\label{subsec:pj}
Our novel decompressor \strPJ{} (refer to \cref{decompPJ} for details) adapts the ideas of the coding optimizers~\strIMCompact{} and~\strEMCompact{} for decompression.
While \strEMCompact{} is I/O-optimal, its resolution phase (\ref{it:compact-resolution}) relies on the fact that we can efficiently find a topological order of the dependency \emph{tree}.
Unfortunately, this is not the case for general DAGs induced by factorizations with multi-dependencies.

We switch to the pointer jumping technique~\cite[Sect. 2.2]{DBLP:books/aw/JaJa92} for dependency resolution.

Let $G$ be the dependency graph of the factorization $T = F_1 \cdots F_b$. 
As a starter, we assume that all factors are single-dependent, 
i.e., each node $v$ representing  a referencing factor has exactly one outgoing edge $(v, p(v))$. 
For all other nodes (representing literal factors) we define $p(v) := v$. 
Clearly, like in \strEMCompact{}, $G$ forms a forest in which each tree is rooted in a literal factor. 
When applying the pointer jumping technique, we take each referencing factor and attach it to the parent of its parent 
(cf.\ \cref{figPointerJumping}). 
Given that $G'$ is the resulting graph with $p'(v) = p(p(v))$,
we thereby halve the depth, i.e., $d(G') = \upgauss{d(G)/2}$ if $d(G) \ge 2$, where $d(G)$ denotes the maximum depth of a tree in $G$. 
Hence, after $\Ot{\lg d(G)}$ iterations all indirect references are resolved and have been replaced by direct references to literal factors.

If we allow multi-dependencies, pointer jumping is only possible for single-dependent nodes. 
To apply pointer jumping, we split each multi-dependent reference into the smallest possible set of single-dependent references.
A split is introduced ad-hoc each time it is required for a pointer jump.
The details of the splitting are discussed in \cref{decompPJ}.

 \begin{figure}
    \centering{%
        \begin{adjustbox}{valign=c}
        \includegraphics[width=0.45\textwidth]{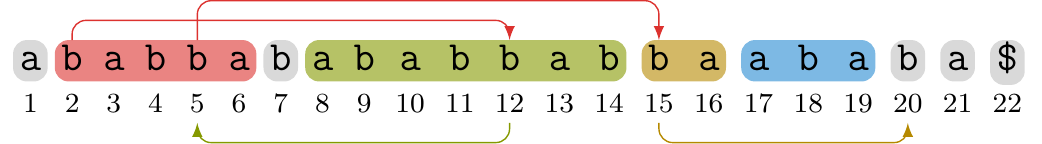}
        \end{adjustbox}
        \hfill
        \begin{adjustbox}{valign=c}
        \includegraphics[width=0.45\textwidth]{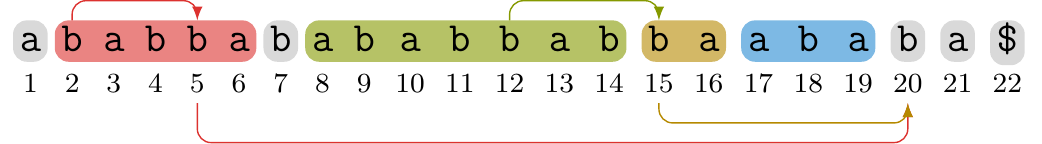}
        \end{adjustbox}
}%

\vspace{1em}
\hspace{4em}
        \begin{adjustbox}{valign=c}
\rotatebox{-90}{%
        \begin{forest}
            for tree={draw=gray}
            [\rotatebox{90}{20},fill=gray!30,
            [\rotatebox{90}{15},fill=solarizedYellow!60,
            [\rotatebox{90}{5},fill=solarizedRed!60,
            [\rotatebox{90}{12},fill=solarizedGreen!60,
            [\rotatebox{90}{2},fill=solarizedRed!60,
            ]]]]]
        \end{forest}
}%
        \end{adjustbox}
\hfill
        \begin{adjustbox}{valign=c}
\rotatebox{-90}{%
        \begin{forest}
            for tree={draw=gray}
            [\rotatebox{90}{20},fill=gray!30,
            [\rotatebox{90}{15},fill=solarizedYellow!60,
            [\rotatebox{90}{12},fill=solarizedGreen!60,]]
            [\rotatebox{90}{5},fill=solarizedRed!60,
            [\rotatebox{90}{2},fill=solarizedRed!60,
            ]]]
        \end{forest}
}%
        \end{adjustbox}
\hspace{4em}
    \caption{Pointer jumping applied to references. Suppose that our example text is represented by the coding described in \cref{figPLCPfactorization}.
        To extract the character $T[2]$, we need to resolve the reference~$(12,5)$, which has a depth of three (\emph{bottom left} figure).
        In case that we split all references into references of length one,
        we can reduce the depth of the reference associated with $T[2]$ by pointer jumping (\emph{right} figure).
        The order in which this technique is applied to the references has an impact on the resulting references.
        Here, we assumed that we can apply this technique \emph{in parallel}.
    }
    \label{figPointerJumping}
 \end{figure}

\begin{figure}
    \centering{%
        \begin{adjustbox}{valign=t}
            \includegraphics[width=0.45\textwidth]{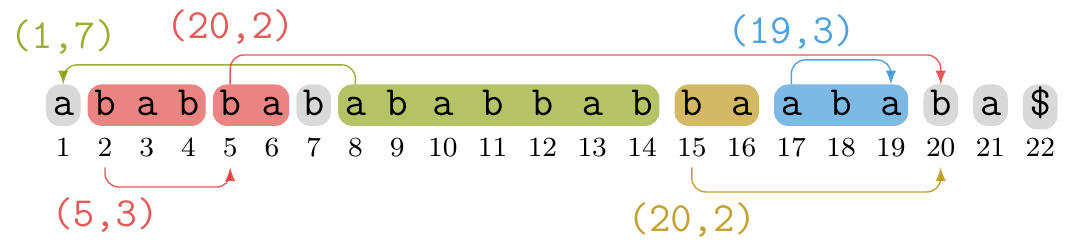}
        \end{adjustbox}
        \hfill
        \begin{adjustbox}{valign=t}
            \includegraphics[width=0.45\textwidth]{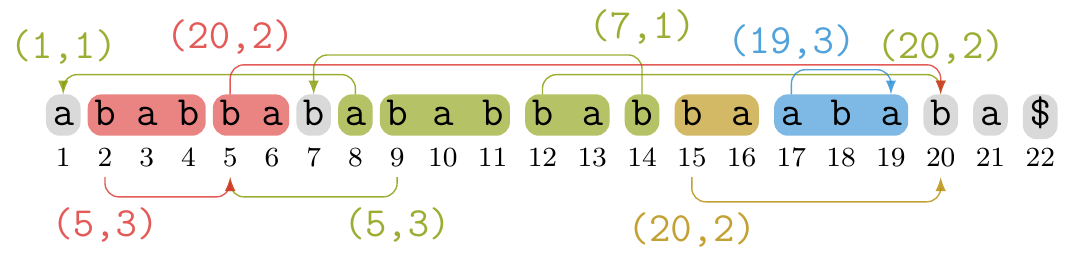}
        \end{adjustbox}
    }%
    \caption{Split-Strategy of \protect\strPJ{} applied to the first (\emph{left} figure) and second (\emph{right} figure) referencing factor of the factorization given in \cref{figPLCPfactorization}.
      \protect\strPJ{} splits up references in a minimal number of sub-references on which the pointer jumping technique can be applied.
        The \emph{left} figure shows such an application to the reference of the leftmost referencing factor that is split into two sub-references.
        The first and second sub-reference receive new referred positions based on the referred positions of the second and third referencing factors, respectively.
        In the \emph{right} figure, we split up the next reference~$(1,7)$ in four sub-references, where the first and last sub-reference refer to literal factors.
    }
    \label{figStrategyPJ}
\end{figure}

Like in \strEMCompact{}, we construct a representation of the dependency graph consisting of two EM vectors called \vecChld{} and \vecFac{}.
Intuitively, each request (child) sends a request message to the first factor it refers to (parent).
Addressing is implemented indirectly in terms of text positions rather than factor indices.
For each reference $(\src, \ell)$ corresponding to a factor~$F_i = T[\dst\twodots\dst+\ell-1]$, we push $\jbracket{\dst, \ell, \src}$ into \vecChld{} and $\jbracket{\src, \ell, \dst}$ into \vecFac{}.
We omit literal factors, since the lack of a reference in \vecFac{} for a certain text position indicates the presence of a literal factor.

Subsequently, we sort both vectors independently, bringing the messages in \vecChld{} and the recipients in \vecFac{} into the same order. On the right side of \cref{fig:depgraph} we see a visualization of the lists (after the initial sorting) for our running example.
We augment~\vecChld{} with an initially empty EM priority queue~\pqSplit.
In the following, after processing a factor~$F_i$, 
we write $F_i$ either to a vector~\vecRes{} if it refers to literal factors, or to a vector~$\vecNChld$ otherwise:
Let $\jbracket{\dst, \ell, \src}$ be the smallest unprocessed request of a factor~$F_i$ received via \vecChld{} or \pqSplit{}.
If it originates from \vecChld{}, we advance \vecChld{}'s read pointer for the next iteration, 
otherwise we dequeue the top element from~\pqSplit{}.
We process the read request $\jbracket{\dst, \ell, \src}$ depending on the following cases (cf.\ \cref{figStrategyPJ}):

\begin{description}
    \item[Jump] The request is completely covered by parent $F_j$ in \vecFac{}.
      In this case, we substitute $F_i$'s reference according to $F_j$ and push it into $\vecNChld$ to be processed in the next iteration.

    \item[Finalize] No parent (partially) overlapping with $F_i$ is available in \vecFac{}.
    Then we know that $F_i$ points to a substring contained in literal factors. We finalize $F_i$ by pushing it into \vecRes{}.

  \item[Split] A prefix of $F_i$ is contained in the parent $F_j$ or points to literals. Let $\ell' < \ell$ be the length of the longest such prefix.
    Then split $F_i$ into a prefix $F^{\textup{P}}_i$ of length $\ell'$ and a suffix $F^{\textup{S}}_i$ of length $\ell - \ell'$.
    By construction, either case `Jump' or case `Finalize' is applicable to $F^{\textup{P}}_i$, and we execute it directly.
    Then we push $\jbracket{\src{+}\ell', \ell{-}\ell', \dst+{\ell'}}$ representing $F^{\textup{S}}_i$ into \pqSplit{} to process it later within the same iteration.
    Observe that $F_i$ can be split multiple times during the same iteration.
\end{description}

\noindent If \vecNChld{} is not empty, we sort it and recurse by processing \vecNChld{} and the (unaltered) \vecFac{} simultaneously as before.
With these steps, we obtain the final result:%

\begin{theorem}\label{lemmaPJ}
    Let $F_1 \cdots F_b = T$ be a $\threshold$-restricted bidirectional scheme, and $d(G) < b$ be the depth of $T$'s dependency graph~$G$.
    Then \strPJ{} requires $\Oh{\lg\left(d(G)\right) \sort(n / \threshold)}$~I/Os.
\end{theorem}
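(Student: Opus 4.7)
The plan is to decompose the total I/O budget as a product of (a)~the number of pointer-jumping iterations and (b)~the I/O cost of one iteration, and to bound each factor separately.

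For (a), I would invoke the classical depth-halving analysis of pointer jumping. The splitting phase within an iteration guarantees that every (sub-)reference is fully contained in exactly one parent in \vecFac{}, so the subsequent jump replaces that parent by its grandparent in the dependency graph~$G$, halving the length of the remaining resolution chain of every active reference. Hence, after at most $\lceil \lg d(G) \rceil$ iterations, every (sub-)reference points directly into literal factors and is written to~\vecRes{}.

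For (b), the key structural claim to prove is that at the start of every iteration, each of \vecFac{}, \vecNChld{}, and \pqSplit{} holds $\Oh{n/\threshold}$ items. The $\threshold$-restriction immediately gives $\abs{\vecFac{}} \le b \le n/\threshold$, and initially also $\abs{\vecChld{}} \le b \le n/\threshold$. Splitting a (sub-)reference is always triggered by a crossing of an original factor boundary on the source side; since every factor has length at least $\threshold$, a (sub-)reference of length $\ell$ spans at most $\lceil \ell/\threshold \rceil$ factors and therefore contributes at most that many pieces. Summing over all active (sub-)references, whose total length is at all times bounded by $n$, yields at most $\Oh{n/\threshold}$ new pieces per iteration, while finalizations remove entries. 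Maintaining this invariant, each iteration costs $\Oh{\sort(n/\threshold)}$ I/Os for sorting and scanning \vecNChld{} and \vecFac{} together with the \sort-bounded operations on the EM priority queue \pqSplit{}. Multiplying with (a) gives the claimed $\Oh{\lg(d(G))\,\sort(n/\threshold)}$ bound.

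The hard part will be stabilizing the $\Oh{n/\threshold}$ invariant against repeated splitting: within one iteration, a single (sub-)reference can spawn several pieces via \pqSplit{}, and across iterations its jumped source may straddle different factor boundaries each time. The cleanest route is probably an amortized argument that charges each split to a distinct boundary-crossing pair (original factor, crossed boundary) in the current source alignment, with the $\threshold$-restriction upper-bounding the number of such pairs by $\Oh{n/\threshold}$; without a lower bound on factor lengths, the sub-reference count would scale with~$n$ rather than $n/\threshold$, so the $\threshold$-restriction is essential here.
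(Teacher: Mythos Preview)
Your proposal is correct and follows essentially the same approach as the paper: decompose into $\Oh{\lg d(G)}$ pointer-jumping rounds, and bound each round by $\Oh{\sort(n/\threshold)}$ I/Os via an $\Oh{n/\threshold}$ cap on the number of active (sub\nobreakdash-)references. The paper's proof is terser on the per-round bound---it simply asserts that the evolving dependency graph remains a valid instance of \cref{def:depgraph} and hence has at most $k = \Ot{n/\threshold}$ nodes---whereas you make the charging-to-factor-boundaries argument explicit and, arguably more honestly, flag the invariant maintenance across rounds as the delicate step.
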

\begin{proof}
    As pointer jumping halves the depth of the dependency graph $G$, \strPJ{} performs $\Oh{\lg d(G)}$ iterations.
    While $G$ changes in each step, it remains valid in terms of \cref{def:depgraph}.
    Despite \strPJ{} introducing new nodes by splitting a factor, the number $\abs{V}$ of nodes of~$G$ is bounded by the maximal number $k := \Ot{n / \threshold}$ of factors, i.e., $\abs{V} = \Oh{k}$.

    Hence the size of the vectors involved is bounded as follows:
    \vecFac{} is filled once where each factor contributes at most one element, thus $|\vecFac| = \Oh{b} = \Oh{k}$.
    \vecChld{} is reproduced in each iteration and may reach up~to $|\vecChld| = \Oh{\abs{V}} = \Oh{k}$ items,
    which directly translates into an upper bound on the number of items in \pqSplit.
    Analogously, $|\vecRes{}|$ is bounded from above by $\Oh{k}$.
    Thus in each round $\Oh{k}$ items are sorted and scanned a constant number of times, resulting in the claimed bound.
\end{proof}

\subsection{Detailed Description of \texorpdfstring{\protect\strPJ{}}{EM-PJ}}\label{decompPJ}
For completeness, we present a more technical representation of our decompression strategy \strPJ{}.
As previously explained, this strategy is based on the optimization technique for improving the decompression of a coding described in \cref{subsec:compact}.
The difference is that \strPJ{} splits references up in a minimal number of sub-references, where each sub-reference either (a) can be immediately decoded or (b) has a referred position that can be set to the referred position of the reference it refers to.
Suppose that a reference~$r$ refers to a substring~$S$ that is not completely decompressed.
We split $r$ into sub-references such that
a sub-reference~$(\src_0,\ell_0)$ refers to a substring~$T[\src_0 \twodots \src_0+\ell_0-1]$ that is either
\begin{itemize}
    \item already decompressed, or
    \item contained in a substring $T[\dst_1 \twodots \dst_1+\ell_1-1]$ with $\dst_1 \le \src_0 \le \src_0+\ell_0-1 \le \dst_1+\ell_1-1$ substituted by a reference~$(\src_1,\ell_1)$, cf.\ \cref{figOneSplit}.
\end{itemize}
In the former case, we can resolve the sub-reference.
In the latter case, we exchange $(\src_0,\ell_0)$ with $(\src_1+\src_0-\dst_1,\ell_0)$.
Due to the pointer jumping technique, we need $\Oh{\lg d} = \Oh{\lg n}$ scans of the references to resolve all references,
where $d \le n$ is the maximal depth a reference can have.

\newcommand*{\ListReq}{\ensuremath{L_{\textup{req}}}}
\newcommand*{\ListRef}{\ensuremath{L_{\textup{ref}}}}
\newcommand*{\ListRes}{\ensuremath{L_{\textup{res}}}}
\newcommand*{\ListReqN}{\ensuremath{L_{\textup{req}}^{\text{new}}}}
\newcommand*{\ListRefN}{\ensuremath{L_{\textup{ref}}^{\text{new}}}}
\newcommand*{\ListResN}{\ensuremath{L_{\textup{res}}^{\text{new}}}}
\newcommand*{\ListResP}{\ensuremath{\hat{L}_{\textup{res}}}}

The strategy \strPJ{} maintains the following lists in external memory:
\begin{itemize}
    \item the list of requests \ListReq{} storing tuples $(\src, \dst, \ell)$ to maintain the information that we request the substring $T[\src \twodots \src+\ell-1]$ to restore $T[\dst \twodots \dst+\ell-1]$,
    \item the list of references \ListRef{} storing tuples $(\dst, \src, \ell)$ corresponding to unresolved referencing factors for applying the pointer jumping technique, and
    \item the list of resolutions \ListRes{} storing tuples $(\dst, S)$ with $S \in \Sigma^*$ for the instruction to copy the string $S$ to $T[\dst \twodots \dst+\abs{S}-1]$.
\end{itemize}

\subparagraph*{Initial Step}
   We create an external file~$T$ with the length of the original text and scan sequentially the list of factors represented by their coding.
We process the $x$-th factor~$F_x$ as follows:
\begin{itemize}
    \item If $F_x$ is a literal factor, copy its contents to $T[1+\abs{F_1\cdots F_{x-1}}]$.
    \item Otherwise, $F_x$ is a referencing factor. Given its reference is~$(\src,\ell)$,
    store $(\dst,\src,\ell)$ in $\ListRef$, and $(\src,\dst,\ell)$ in $\ListReq$, where $\dst$ is the starting position of $F_x$.
\end{itemize}
Subsequently, sort the request tuples by their first component (the source position).
The reference list~$\ListRef$ is already sorted with respect to the first component of its tuples.

\subparagraph*{Recursion Step}
After the initial step, we process the three lists until every reference got resolved.
For that, we treat the three lists \ListReq{}, \ListRef{}, and \ListRes{} as queues, discarding a read tuple as it will no longer be needed.
Additionally to these lists, we create new lists \ListReqN{}, \ListRefN{}, and \ListResN{} whose contents we fill during a scan.
During a scan, we process the three lists \ListReq{}, \ListRef{}, and \ListRes{} simultaneously with respect to their first components.
Whenever we are at a text position that is equal to the first component of a resolution of \ListRes{} or request of \ListReq{} we take action:

First, suppose that we are at a position~$\dst$ and that there is a resolution $(\dst, S)$ at the head of $\ListRes$.
In this case, we set $T[\dst \twodots \dst+\abs{S}-1] \gets S$.

\begin{figure}
    \centering{%
        \begin{adjustbox}{valign=t}
            \includegraphics{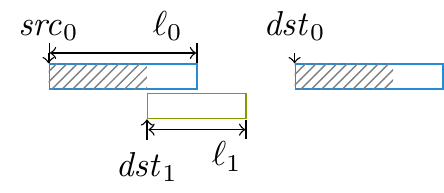}
        \end{adjustbox}
        \hfill
        \begin{adjustbox}{valign=t}
            \includegraphics{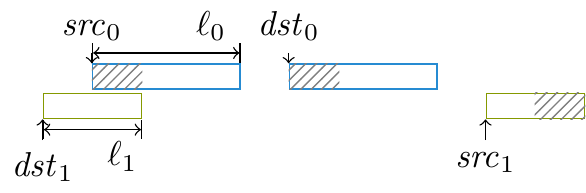}
        \end{adjustbox}
    }%

    \centering{%
            (1) $\src_0 \le \dst_1 - 1$
            \hfill
            (2) $\src_0 \ge \dst_1$
    }%
    \caption{Cases studied in \cref{decompPJ}. In the left figure (Case~1), the shaded part $T[\src_0 \twodots \src_0+\ell_0-\dst_1]$ is decoded and can be copied to
        $T[\dst_0 \twodots \dst_0+\ell_0-\dst_1]$.
        In the right figure (Case~2),
        the shaded part $T[\src_0 \twodots \dst_1+\ell_1-1]$ was substituted by the reference $(\dst_1,\src_1,\ell)$.
        Following this reference, the substring $T[\src_0 \twodots \dst_1+\ell_1-1]$ can be decoded by decoding
        $T[\src_1-\src_0-\dst_1 \twodots \src_1+\ell_1-1]$.
    }
    \label{figDecompPJ}
\end{figure}

Second, suppose that we are at a position~$\src_0$ and there is a request $(\src_0,\dst_0,\ell_0)$ at the head of $\ListReq$.
We scan the list of references~$\ListRef$ for a reference $(\dst_1, \src_1, \ell_1)$ with the smallest $\dst_1$ with $\src_0 \le \dst_1 + \ell_1 - 1$ while discarding all entries whose first component precedes~$\dst_1$.
If $\src_0 + \ell_0 \le \dst_1$, then $T[\src_0 \twodots \src_0+\ell_0-1]$ is already decoded; hence we can
insert $(\dst_0, T[\src_0 \twodots \src_0+\ell_0-1])$ into the new resolution list~$\ListResN$.
Otherwise ($\src_0 + \ell_0 > \dst_1$), we consider two cases (cf.~\cref{figDecompPJ}):

\begin{itemize}
    \item[Case 1:]  $\src_0 < \dst_1$.
    In this case, $T[\src_0 \twodots \dst_1-1]$ is already resolved, and we insert $(\dst_0, T[\src_0 \twodots \dst_1-1])$ into the new resolution list~$\ListResN$.
    We update the request $(\src_0, \dst_0, \ell_0)$ to $(\dst_1, \dst_0+\dst_1-\src_0, \ell_0-\dst_1+\src_0)$, and proceed with Case~2.

    \item[Case 2:]  $\src_0 \ge \dst_1$.
    If $\dst_1 + \ell_1 - \src_0 \ge \ell_0$, then we can pointer jump the request $(\src_0, \dst_0, \ell_0)$ to
    $(\src_1+\src_0-\dst_1, \dst_0, \ell_0)$.
    Otherwise, we split the request in two requests $(\src_0,\dst_0,\dst_1+\ell_1-\src_0)$ and $(\dst_1+\ell_1,\dst_0+\dst_1+\ell_1-\src_0,\ell_0-\dst_1-\ell_1+\src_0)$.
    We can pointer jump the first request to $(\src_1+\src_0-\dst_1, \dst_0, \dst_1+\ell_1-\src_0)$.
\end{itemize}
In both cases, when creating a new request~$(\src,\dst,\ell)$, we insert it into the new request list~$\ListReqN$, and insert $(\dst,\src,\ell)$ into the new reference list~$\ListRefN$.

After the scan, we move the contents of the new lists \ListReqN{}, \ListRefN{}, and \ListResN{} to their corresponding lists \ListReq{}, \ListRef{}, and \ListRes{}, respectively.
We repeat this process until the list of references~\ListRef{} becomes empty.

Due to practical issues, we did not implement the lists of resolutions~\ListRes{} and \ListResN{} as explained,
since an entry of these lists would hold a string of \emph{arbitrary} length.
Instead, we use a single list $\ListResP$ storing tuples $(\src,\dst,\ell)$ saying that the substring $T[\src \twodots \src+\ell-1]$ is already decoded and can be copied to $T[\dst \twodots \dst+\ell-1]$.
To avoid random I/O, we first process a request~$(\src,\dst,\ell) \in \ListReq$ completely with Case~1 and Case~2 before selecting the next request~$(\src',\dst',\ell') \in \ListReq$. However, both requests can overlap with $\src \le \src' \le \src+\ell$ such that $\ListResP$ can become unsorted (cf.~\cref{figUnorderedResolutionList}).
We sort $\ListResP$ after a scan of all requests according to its first component.
Subsequently, we scan the text and \ListResP{} to produce, given a tuple $(\src,\dst,\ell) \in \ListResP$, the tuples $(\dst+i,T[\src+i])$ for all integers $i$ with $0 \le i \le \ell-1$.
These tuples are sorted by their first component.
With a linear scan on $T$, we set $T[\dst] \gets c$ for each such tuple~$(\dst,c)$.

\begin{figure}
    \centering{%
        \includegraphics[scale=1.0]{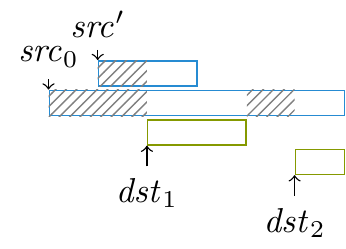}
    }%
    \caption{Unordered insertion into $\ListResP$. Suppose that the first tuples in the request list~$\ListReq$ are $(\src_0,\dst_0,\ell_0)$ and $(\src',\dst',\ell')$ with $\src_0 \le \src'$ and that the first tuples in the reference list~$\ListRef$ are $(\dst_1,\src_1,\ell_1)$ and $(\dst_2,\src_2,\ell_2)$ with
        $\src' < \dst_1 \le \src'+\ell'-1$.
        Since we first process $(\src_0,\dst_0,\ell_0)$ and its resulting sub-requests by Cases~1 and~2, we produce the resolution
        to copy $T[\dst_1+\ell_1 \twodots \dst_2-1]$ to $T[\dst_0+\dst_1+\ell_1-\src_0 \twodots \dst_0+\dst_2-1-\src_0]$ prior to producing the resolution
        to copy $T[\src' \twodots \dst_1-1]$ to $T[\dst' \twodots \dst'+\dst_1-1-\src']$.
    }
    \label{figUnorderedResolutionList}
\end{figure}

\section{Practical Evaluation}
We embedded our algorithms in the C++ framework tudocomp, available at \url{https://github.com/tudocomp/tudocomp}.
We collected the used EM algorithms for constructing the needed text data structures in the repository \url{https://github.com/tudocomp/emtools}.

\subparagraph*{Experimental Setup}
Our experiments are conducted on a machine with \SI{16}{\gibi\byte} of RAM\footnote{In order to avoid swapping, each experiment was conducted with a limit of \SI{14}{\gibi\byte} of RAM.},
eight \texttt{Hitachi HUA72302} hard drives with 1.8 TiB,
two \texttt{Samsung SSD 850} SSDs with 465.8 GiB, and
an \texttt{Intel Xeon CPU}~\texttt{i7-6800K}.
The operating system is a 64-bit version of Ubuntu Linux~\num{16.04}.
We implemented \iPlcpcomp{} in the version 1.4.99~(development snapshot) of the STXXL library~\cite{dementiev08stxxl}.
We compiled the source code with the GNU \texttt{g++~7.4} compiler with the compile flags \texttt{-O3 -march=native -DNDEBUG}.

\subparagraph*{Text Collections}
\begin{table}
	\centerline{%
		\begin{tabular}{r*{8}{r}}
			\toprule
			\multicolumn{4}{l}{\texttt{commoncrawl}} \\
			prefix length & $H_0$ & $H_1$ & $H_2$ & $H_3$ & $H_4$ & $H_5$ & $H_6$ & $H_7$
			\\\midrule
			16 GiB &
			5.99165 &
			4.26109 &
			3.48920 &
			2.94113 &
			2.42738 &
			2.01886 &
			1.64558 &
			1.35130
			\\
			32 GiB &
			5.99145 &
			4.26160 &
			3.49006 &
			2.94411 &
			2.43471 &
			2.03284 &
			1.66737 &
			1.37798
			\\
			64 GiB &
			5.99119 &
			4.26209 &
			3.49100 &
			2.94669 &
			2.44088 &
			2.04409 &
			1.68482 &
			1.40001
			\\
			128 GiB  &
			5.99177 &
			4.26148 &
			3.49055 &
			2.94684 &
			2.44231 &
			2.04753 &
			1.69087 &
			1.40839
			\\\bottomrule
			\toprule
			\multicolumn{4}{l}{\texttt{dna}} \\
			prefix length & $H_0$ & $H_1$ & $H_2$ & $H_3$ & $H_4$ & $H_5$ & $H_6$ & $H_7$
			\\\midrule
			16 GiB &
			1.9715 &
			1.94676&
			1.93166&
			1.92232&
			1.91167&
			1.89491&
			1.87101&
			1.84585
			\\
			32 GiB &
			1.97128&
			1.94561&
			1.93201&
			1.92421&
			1.91507&
			1.90190&
			1.88270&
			1.86160
			\\
			64 GiB &
			1.97067&
			1.94506&
			1.93145&
			1.92424&
			1.91588&
			1.90445&
			1.88763&
			1.86889
			\\
			128 GiB  &
			1.97528&
			1.95010&
			1.93873&
			1.93273&
			1.92486&
			1.91341&
			1.89601&
			1.87634
			\\\bottomrule
		\end{tabular}
	}%
	\caption{Empirical entropies of our data sets. 
		The alphabet sizes of all instances are 242 and 4 for \texttt{commoncrawl} and \texttt{dna}, respectively.}
	\label{tableEntropy}
\end{table}

We conduct our experiments on two texts of different alphabet sizes and repetitiveness (cf.\ \cref{tableEntropy}):
\begin{itemize}
    \item \textsc{commoncrawl}: A crawl of web pages with an alphabet size of 242 collected by the commoncrawl organization.
    \item \textsc{dna}: DNA sequences with an alphabet size of 4 extracted from FASTA files.
\end{itemize}

\subparagraph*{Algorithms}
We compare \iPlcpcomp{} against \iEMLPF{}~\cite{karkkainen14parsing} by \citeauthor{karkkainen14parsing}, 
which is an EM algorithm computing the \iLZSS{} factorization by constructing the \LPF{} array. 
In addition to the input text, it requires \SA{} and \LCP{}.

In early experiments with \iLZscan{}~\cite{karkkainen14parsing}, 
it became clear that its throughput on the text collection we use is nowhere near competitiveness with \iEMLPF{} and \iPlcpcomp{}. 
Therefore, it is not considered in our experiments. 
Semi-external \iLZSS{} algorithms like \iSEKKP{}~\cite{karkkainen14parsing} 
storing the text or parts of the text in RAM have not been considered.

\subparagraph*{Data Structures}
Currently, the fastest way to compute the data structures $\PLCP$ and $\Phi$ in EM is
to compute $\BWT$ from $\SA$ 
with the parallel EM algorithm \iEMBWT{} by K\"arkk\"ainen and Kempa\footnote{\url{https://www.cs.helsinki.fi/u/dkempa/pem_bwt.html}},
and use it for computing $\PLCP$ with 
the parallel EM construction algorithm of~\citet{karkkainen17plcp}. 
We modified the source code of the latter to also produce $\Phi$ as a side product.
This chain of algorithms is illustrated in \cref{figPLCPschedule}.

For \iEMLPF{}, we additionally need to convert \PLCP{} to \LCP{} by a scan over $\SA{}$
and a subsequent sort step.
This is currently the fastest approach for obtaining \LCP{}, 
as other approaches building \LCP{} directly from \SA{} like~\cite{karkkainen16lcpscan} are slower.

Consequently, both contestants need (directly or indirectly) \SA{}.
However, it takes a considerable amount of time to construct it 
with EM algorithms on a single machine (e.g., with pSAScan~\cite{karkkainen15psascan}).
To put the focus on the comparison between \iEMLPF{} and \iPlcpcomp{}, we do not take into account the construction of \SA{} and \LCP{} when measuring running times.

\begin{figure}
    \centering{%
        \includegraphics[scale=1.0]{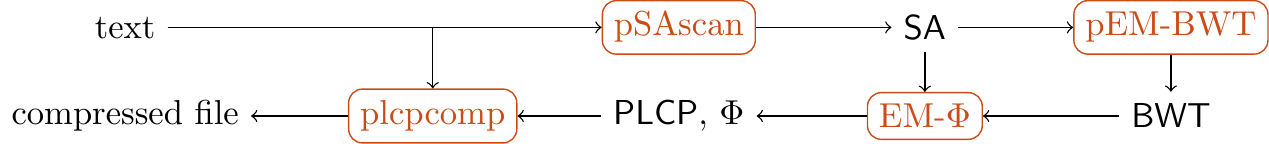}
    }%
    \caption{Construction schedule of the text data structures needed for \protect\iPlcpcomp{}.}
    \label{figPLCPschedule}
\end{figure}

\subparagraph*{Measurements and Results}

\begin{figure}
	\centering
	\begin{tikzpicture}
	\begin{axis}[plcpplots4,
	title={\bfseries Throughput},
	ylabel={\textsc{\bfseries DNA}\\[.75em]Throughput [GiB / h]\\[-.35em]},
	ylabel style={align=center},
	xtick={16,32,64,128},
	]
	
	\addplot coordinates { (16,12.1519) (32,10.8475) (64,10.5495) (128,9.97403) };
	\addlegendentry{chain=em-lpf};
	\addplot coordinates { (16,45.7143) (32,39.1837) (64,44.1379) (128,48.0) };
	\addlegendentry{chain=plcpcomp};
	\legend{}
	\end{axis}
	\end{tikzpicture}
	\hfill
	\begin{tikzpicture}
	\begin{axis}[plcpplots4,
	title={\bfseries Maximum Disk Use},
	ylabel={Max Disk Use [TiB]},
	xtick={16,32,64,128},
	]
	
	\addplot coordinates { (16,0.234376) (32,0.468751) (64,0.937501) (128,1.875) };
	\addlegendentry{chain=em-lpf};
	\addplot coordinates { (16,0.0271263) (32,0.0509014) (64,0.099268) (128,0.162624) };
	\addlegendentry{chain=plcpcomp};
	\legend{}
	\end{axis}
	\end{tikzpicture}
	\hfill
	\begin{tikzpicture}
	\begin{axis}[plcpplots4,
	title={\bfseries Factors},
	ylabel={\# Factors},
	xtick={16,32,64,128},
	]
	\addplot coordinates { (16,8.55084e+08) (32,1.61601e+09) (64,3.16044e+09) (128,5.2284e+09) };
	\addlegendentry{chain=em-lpf};
	\addplot coordinates { (16,1.03364e+09) (32,1.93675e+09) (64,3.77316e+09) (128,6.18054e+09) };
	\addlegendentry{chain=plcpcomp};
	\legend{}
	\end{axis}
	\end{tikzpicture}
	\hfill
	\begin{tikzpicture}
	\begin{axis}[plcpplots4,
	title={\bfseries Referencing Factors},
	ylabel={\# Ref. Factors},
	xtick={16,32,64,128},
	legend to name={leg:compare:dna},
	legend columns=2,
	legend style={font=\small},
	]
	
	\addplot coordinates { (16,8.55084e+08) (32,1.61601e+09) (64,3.16044e+09) (128,5.2284e+09) };
	\addlegendentry{chain=em-lpf};
	\addplot coordinates { (16,9.62039e+08) (32,1.80535e+09) (64,3.52074e+09) (128,5.76793e+09) };
	\addlegendentry{chain=plcpcomp};

	\legend{}
	\addlegendentry{\iEMLPF{}};
	\addlegendentry{\iPlcpcomp{}};
	\end{axis}
	\end{tikzpicture}\\[.75em]
	
	\begin{tikzpicture}
	\begin{axis}[plcpplots4,
	xlabel={Input size [GiB]},
	ylabel={\textsc{\bfseries COMMONCRAWL}\\[.75em]Throughput [GiB / h]},
	ylabel style={align=center},
	xtick={16,32,64,128},
	]
	
	\addplot coordinates { (16,12.0) (32,10.8475) (64,10.4918) (128,8.97196) };
	\addlegendentry{chain=em-lpf};
	\addplot coordinates { (16,106.667) (32,91.4286) (64,116.364) (128,144.906) };
	\addlegendentry{chain=plcpcomp};
	\legend{}
	\end{axis}
	\end{tikzpicture}
	\hfill
	\begin{tikzpicture}
	\begin{axis}[plcpplots4,
	xlabel={Input size [GiB]},
	ylabel={Max Disk Use [TiB]},
	xtick={16,32,64,128},
	]
	
	\addplot coordinates { (16,0.234376) (32,0.468751) (64,0.937501) (128,1.875) };
	\addlegendentry{chain=em-lpf};
	\addplot coordinates { (16,0.0153141) (32,0.0275536) (64,0.0496273) (128,0.0712452) };
	\addlegendentry{chain=plcpcomp};
	\legend{}
	\end{axis}
	\end{tikzpicture}
	\hfill
	\begin{tikzpicture}
	\begin{axis}[plcpplots4,
	xlabel={Input size (GiB)},
	ylabel={\# Factors},
	xtick={16,32,64,128},
	]
	
	\addplot coordinates { (16,5.56672e+08) (32,9.95691e+08) (64,1.78276e+09) (128,2.55056e+09) };
	\addlegendentry{chain=em-lpf};
	\addplot coordinates { (16,5.98973e+08) (32,1.07485e+09) (64,1.93019e+09) (128,2.7668e+09) };
	\addlegendentry{chain=plcpcomp};
	\legend{}
	\end{axis}
	\end{tikzpicture}
	\hfill
	\begin{tikzpicture}
	\begin{axis}[plcpplots4,
	xlabel={Input size [GiB]},
	ylabel={\# Ref. Factors},
	xtick={16,32,64,128},
	transpose legend=true,
	legend to name={leg:compare:cc},
	legend style={font=\small},
	]
	
	\addplot coordinates { (16,5.56658e+08) (32,9.95675e+08) (64,1.78274e+09) (128,2.55054e+09) };
	\addlegendentry{chain=em-lpf};
	\addplot coordinates { (16,5.43157e+08) (32,9.77253e+08) (64,1.76013e+09) (128,2.52689e+09) };
	\addlegendentry{chain=plcpcomp};
	
	\end{axis}
	\end{tikzpicture}
	
	\begin{minipage}{0.6\linewidth}
		\vspace{1em}
		\caption{Performance %
			with different prefix lengths.}
		\label{figResultsCCAndDNA}
	\end{minipage}
	\hfill
	\begin{minipage}{0.35\linewidth}
		\vspace{-1em}
		\ref{leg:compare:dna}
	\end{minipage}
\end{figure}
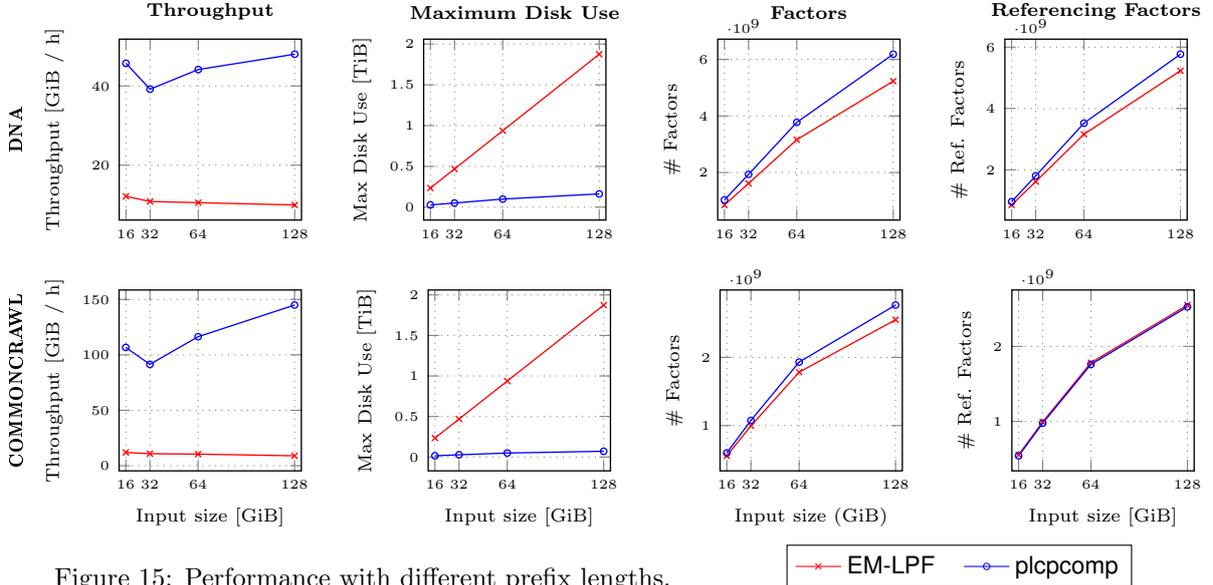

Our experiments measure the throughput, the maximum hard disk usage, and the number of referencing factors,
for \iEMLPF{} and \iPlcpcomp{} for $2^k \SI{}{\gibi\byte}$ prefixes ($4 \le k \le 7$) of our data sets \textsc{dna} and \textsc{commoncrawl}.
We collected the median of three iterations and present the results in \cref{figResultsCCAndDNA}.
The plots show that \iPlcpcomp{} is magnitudes faster on both data sets (cf.\ plots \bsq{Throughput}).
The reason for this could be that the disk accesses of \iEMLPF{} scale much worse than those of \iPlcpcomp{} (cf.\ plots \bsq{Maximum Disk Use}).
We point out that \iPlcpcomp{} is already faster than the step for computing \LCP{} from \PLCP{} and \SA{}.
Regarding the number of factors, \iPlcpcomp{} is on par with \iLZSS{} (rightmost plots), 
producing, relatively speaking, slightly more factors.

\subparagraph*{Decompression}
We ran our decompressor implementation on the \iPlcpcomp{} codings of our datasets.
Plots of the scaling experiments are shown in \cref{figResultsDecomp}.
As the decompression algorithm is superlinear, the throughput is decreasing with increasing text size. However, comparing the results for the 32GiB and 64GiB commoncrawl decompression, the throughput only decreases by $1\%$. The throughput between the 32GiB and 64 GiB DNA decompression differs by only $5\%$. The maximum external memory allocation rises linearly with increasing text size.

In \cref{figResultsTheta}, we measured the impact of the choice of $\threshold$ on the compressed output and the decompression algorithm of our datasets.
For larger values of $\threshold$, \iPlcpcomp{} creates less referencing factors, but the total number of factors increases (as we obtain much more literal factors). Having less referencing factors, the decompression needs less disk space.

Our decompression requires multiple sorting steps on the factor lists such as $\vecChld$ (cf.~\cref{sec:decompression}). 
The number of these steps depend on the maximum depth of (a tree in) the dependency graph induced by the factorization.
Therefore, it is not surprising that the decompressor is magnitudes slower than the comparatively simple compression algorithm.

Furthermore, and for the same reason, our decompression (expectedly) runs slower than the external memory Lempel-Ziv decoder of \citet{belazzougui16decoding}, which is why we skip a more detailed performance comparison here.

\section{Conclusions}
We presented \iPlcpcomp{}, the first external memory bidirectional compression algorithm, and showed its practicality by performing experiments on very large data sets, using only very limited RAM\@.
We also presented a decompression algorithm in external memory, which can decode the output of any bidirectional compression scheme (not only \iPlcpcomp{}).
Possible future steps include relating the number of factors of \iPlcpcomp{} to the minimal number of factors in a bi- or unidirectional compression scheme, evaluating the whole compression chain by also experimenting on \emph{codings} of the output of \iPlcpcomp{} (similar to \cite{dinklage17tudocomp}), and improving the performance of the decompression algorithm.

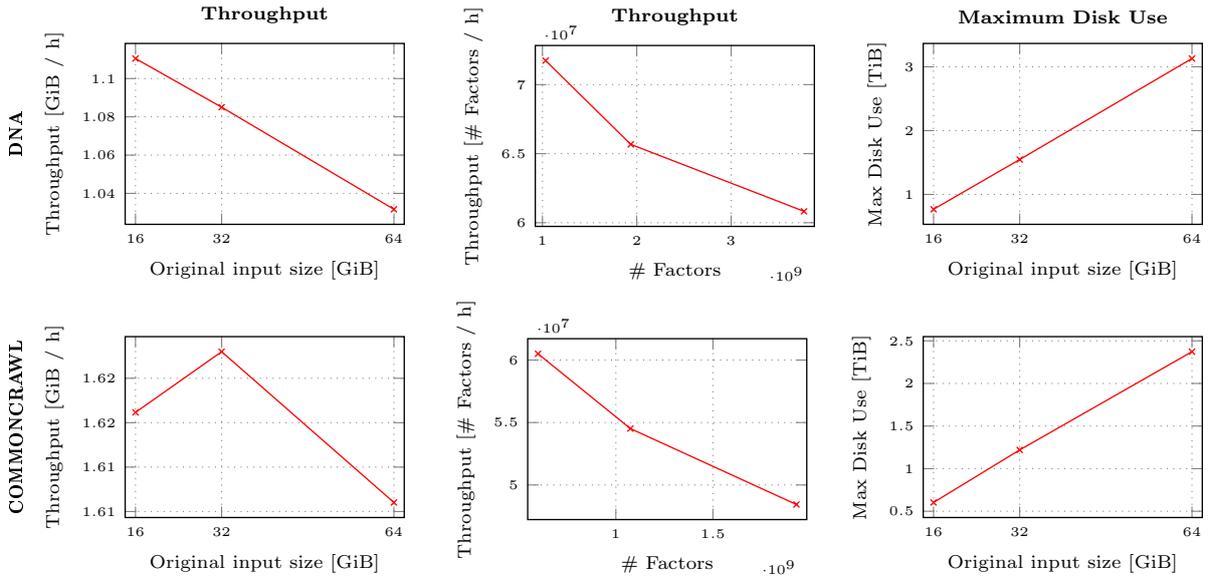
\begin{figure}
\centering
\begin{tikzpicture}
    \begin{axis}[plcpplots,
title={\bfseries Throughput},
xlabel={Original input size [GiB]},
ylabel={\textsc{\bfseries DNA}\\[.75em]Throughput [GiB / h]},
ylabel style={align=center},
xtick={16,32,64,128},
]
\addplot coordinates { (16,1.11047) (32,1.08505) (64,1.0317) };
\addlegendentry{algo=plcp-decomp};
\legend{}
\end{axis}
\end{tikzpicture}
\hfill
\begin{tikzpicture}
    \begin{axis}[plcpplots,
title={\bfseries Throughput},
xlabel={\# Factors},
ylabel={Throughput [\# Factors / h]},
ylabel style={align=center},
]
\addplot coordinates { (1.03364e+09,7.17389e+07) (1.93675e+09,6.56711e+07) (3.77316e+09,6.08247e+07) };
\addlegendentry{algo=plcp-decomp};
\legend{}
\end{axis}
\end{tikzpicture}
\hfill
\begin{tikzpicture}
\begin{axis}[plcpplots,
title={\bfseries Maximum Disk Use},
xlabel={Original input size [GiB]},
ylabel={Max Disk Use [TiB]},
xtick={16,32,64,128},
]
\addplot coordinates { (16,0.767418) (32,1.54679) (64,3.1301) };
\addlegendentry{algo=plcp-decomp};
\legend{}
\end{axis}
\end{tikzpicture}
\begin{tikzpicture}
    \begin{axis}[plcpplots,
xlabel={Original input size [GiB]},
ylabel={\textsc{\bfseries COMMONCRAWL}\\[.75em]Throughput [GiB / h]},
ylabel style={align=center},
xtick={16,32,64,128},
]
\addplot coordinates { (16,1.61616) (32,1.62299) (64,1.60602) };
\addlegendentry{algo=plcp-decomp};
\legend{}
\end{axis}
\end{tikzpicture}
\hfill
\begin{tikzpicture}
    \begin{axis}[plcpplots,
xlabel={\# Factors},
ylabel={Throughput [\# Factors / h]},
ylabel style={align=center},
]
\addplot coordinates { (5.98973e+08,6.05023e+07) (1.07485e+09,5.45146e+07) (1.93019e+09,4.84365e+07) };
\addlegendentry{algo=plcp-decomp};
\legend{}
\end{axis}
\end{tikzpicture}
\hfill
\begin{tikzpicture}
\begin{axis}[plcpplots,
xlabel={Original input size [GiB]},
ylabel={Max Disk Use [TiB]},
xtick={16,32,64,128},
]
\addplot coordinates { (16,0.603251) (32,1.21967) (64,2.3725) };
\addlegendentry{algo=plcp-decomp};
\legend{}
\end{axis}
\end{tikzpicture}
\caption{Performance of the decompression with different prefix lengths.}
\label{figResultsDecomp}
\end{figure}

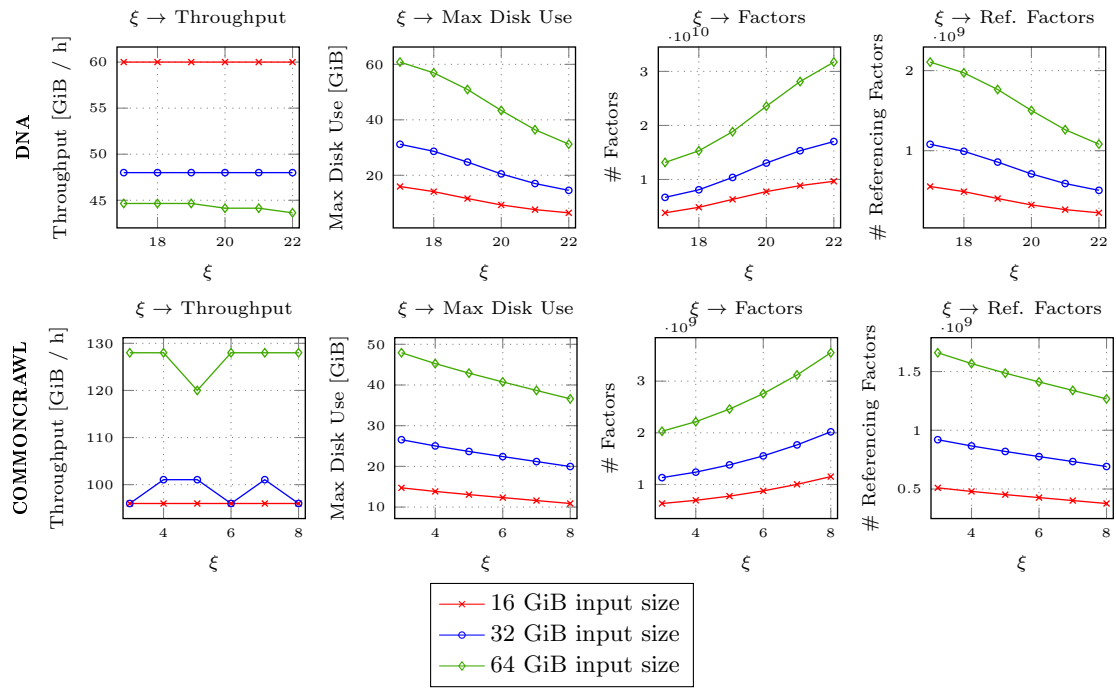
\begin{figure}
	\centering
	\begin{tikzpicture}
	\begin{axis}[plcpplots4,
	title={$\threshold\rightarrow$ Throughput},
	xlabel={$\threshold$},
	ylabel={\textsc{\bfseries DNA}\\[.75em] Throughput [GiB / h]},
	ylabel style={align=center},
	]
	\addplot coordinates { (17,60.0) (18,60.0) (19,60.0) (20,60.0) (21,60.0) (22,60.0) };
	\addlegendentry{size=16};
	\addplot coordinates { (17,48.0) (18,48.0) (19,48.0) (20,48.0) (21,48.0) (22,48.0) };
	\addlegendentry{size=32};
	\addplot coordinates { (17,44.6512) (18,44.6512) (19,44.6512) (20,44.1379) (21,44.1379) (22,43.6364) };
	\addlegendentry{size=64};
	\legend{}
	\end{axis}
	\end{tikzpicture}
	\begin{tikzpicture}
	\begin{axis}[plcpplots4,
	title={$\threshold\rightarrow$ Max Disk Use},
	xlabel={$\threshold$},
	ylabel={Max Disk Use [GiB]},
	]
	\addplot coordinates { (17,15.9531) (18,14.1113) (19,11.625) (20,9.3086) (21,7.62501) (22,6.46876) };
	\addlegendentry{size=16};
	\addplot coordinates { (17,31.1797) (18,28.6465) (19,24.7637) (20,20.4961) (21,17.0254) (22,14.5684) };
	\addlegendentry{size=32};
	\addplot coordinates { (17,60.8184) (18,56.9316) (19,50.9277) (20,43.375) (21,36.4336) (22,31.2266) };
	\addlegendentry{size=64};
	\legend{}
	\end{axis}
	\end{tikzpicture}
	\begin{tikzpicture}
	\begin{axis}[plcpplots4,
	title={$\threshold\rightarrow$ Factors},
	xlabel={$\threshold$},
	ylabel={\# Factors},
	]
	\addplot coordinates { (17,3.82768e+09) (18,4.84732e+09) (19,6.31292e+09) (20,7.7558e+09) (21,8.86394e+09) (22,9.6655e+09) };
	\addlegendentry{size=16};
	\addplot coordinates { (17,6.68603e+09) (18,8.09011e+09) (19,1.03754e+10) (20,1.30364e+10) (21,1.53204e+10) (22,1.70236e+10) };
	\addlegendentry{size=32};
	\addplot coordinates { (17,1.31503e+10) (18,1.5304e+10) (19,1.8839e+10) (20,2.35477e+10) (21,2.81163e+10) (22,3.17246e+10) };
	\addlegendentry{size=64};
	\legend{}
	\end{axis}
	\end{tikzpicture}
	\begin{tikzpicture}
	\begin{axis}[plcpplots4,
	title={$\threshold\rightarrow$ Ref. Factors},
	xlabel={$\threshold$},
	ylabel={\# Referencing Factors},
	]
	\addplot coordinates { (17,5.52495e+08) (18,4.88767e+08) (19,4.02556e+08) (20,3.22396e+08) (21,2.64072e+08) (22,2.23995e+08) };
	\addlegendentry{size=16};
	\addplot coordinates { (17,1.0799e+09) (18,9.92143e+08) (19,8.57712e+08) (20,7.09883e+08) (21,5.89668e+08) (22,5.0451e+08) };
	\addlegendentry{size=32};
	\addplot coordinates { (17,2.10649e+09) (18,1.97188e+09) (19,1.76394e+09) (20,1.50234e+09) (21,1.26189e+09) (22,1.08148e+09) };
	\addlegendentry{size=64};
	\legend{}
	\end{axis}
	\end{tikzpicture}
	
	\begin{tikzpicture}
	\begin{axis}[plcpplots4,
	title={$\threshold\rightarrow$ Throughput},
	xlabel={$\threshold$},
	ylabel={\textsc{\bfseries COMMONCRAWL}\\[.75em] Throughput [GiB / h]},
	ylabel style={align=center},
	]
	\addplot coordinates { (3,96.0) (4,96.0) (5,96.0) (6,96.0) (7,96.0) (8,96.0) };
	\addlegendentry{size=16};
	\addplot coordinates { (3,96.0) (4,101.053) (5,101.053) (6,96.0) (7,101.053) (8,96.0) };
	\addlegendentry{size=32};
	\addplot coordinates { (3,128.0) (4,128.0) (5,120.0) (6,128.0) (7,128.0) (8,128.0) };
	\addlegendentry{size=64};
	\legend{}
	\end{axis}
	\end{tikzpicture}
	\begin{tikzpicture}
	\begin{axis}[plcpplots4,
	title={$\threshold\rightarrow$ Max Disk Use},
	xlabel={$\threshold$},
	ylabel={Max Disk Use [GiB]},
	]
	\addplot coordinates { (3,14.7266) (4,13.8418) (5,13.0625) (6,12.3281) (7,11.5977) (8,10.8789) };
	\addlegendentry{size=16};
	\addplot coordinates { (3,26.5527) (4,25.0215) (5,23.666) (6,22.4102) (7,21.1836) (8,19.9649) };
	\addlegendentry{size=32};
	\addplot coordinates { (3,47.9238) (4,45.2617) (5,42.9121) (6,40.752) (7,38.666) (8,36.5977) };
	\addlegendentry{size=64};
	\legend{}
	\end{axis}
	\end{tikzpicture}
	\begin{tikzpicture}
	\begin{axis}[plcpplots4,
	title={$\threshold\rightarrow$ Factors},
	xlabel={$\threshold$},
	ylabel={\# Factors},
	]
	\addplot coordinates { (3,6.32111e+08) (4,6.93412e+08) (5,7.7444e+08) (6,8.76225e+08) (7,1.00245e+09) (8,1.15212e+09) };
	\addlegendentry{size=16};
	\addplot coordinates { (3,1.13244e+09) (4,1.23863e+09) (5,1.37929e+09) (6,1.55328e+09) (7,1.76587e+09) (8,2.01934e+09) };
	\addlegendentry{size=32};
	\addplot coordinates { (3,2.03047e+09) (4,2.21484e+09) (5,2.45907e+09) (6,2.75838e+09) (7,3.11944e+09) (8,3.54957e+09) };
	\addlegendentry{size=64};
	\legend{}
	\end{axis}
	\end{tikzpicture}
	\begin{tikzpicture}
	\begin{axis}[plcpplots4,
	legend to name={leg:theta},
	transpose legend=true,
	legend style={font=\small},
	legend columns=3,
	title={$\threshold\rightarrow$ Ref. Factors},
	xlabel={$\threshold$},
	ylabel={\# Referencing Factors},
	]
	\addplot coordinates { (3,5.10018e+08) (4,4.79368e+08) (5,4.52358e+08) (6,4.26912e+08) (7,4.01666e+08) (8,3.76722e+08) };
	\addlegendentry{size=16};
	\addplot coordinates { (3,9.19655e+08) (4,8.66561e+08) (5,8.19676e+08) (6,7.76179e+08) (7,7.3366e+08) (8,6.91415e+08) };
	\addlegendentry{size=32};
	\addplot coordinates { (3,1.65986e+09) (4,1.56767e+09) (5,1.48626e+09) (6,1.41143e+09) (7,1.33922e+09) (8,1.26753e+09) };
	\addlegendentry{size=64};
	\legend{}
	
	\addlegendentry{$16$ GiB input size };
	\addlegendentry{$32$ GiB input size };
	\addlegendentry{$64$ GiB input size };
	\end{axis}
	\end{tikzpicture}
	
	\ref{leg:theta}
	\caption{Evaluation of \protect\iPlcpcomp{} with different threshold values $\threshold$.}
	\label{figResultsTheta}
\end{figure}

\bibliographystyle{plainnat}
\bibliography{literature}

\end{document}